\theoremstyle{plain}
\newtheorem{theorem}{Theorem}[section]
\newtheorem{corollary}[theorem]{Corollary}
\newtheorem{lemma}[theorem]{Lemma}
\newtheorem{definition}[theorem]{Definition}
\newtheorem{assumption}[theorem]{Assumption}
\theoremstyle{remark}
\newtheorem{remark}[theorem]{Remark}
\newtheorem{example}[theorem]{Example}
\numberwithin{equation}{section}
\newcommand{\ind}{1\!\kern-1pt \mathrm{I}}
\newcommand{\rsto}{]\!\kern-1.8pt ]}
\newcommand{\lsto}{[\!\kern-1.7pt [}
\numberwithin{equation}{section}
\renewcommand{\P}{\mathbb{P}}
\newcommand{\la}{\lambda}
\renewcommand{\rho}{\varrho}
\begin{document}
\title[FTAP in a two filtration setting]{A fundamental theorem of
  asset pricing for continuous time large financial markets in a two
  filtration setting}
\begin{abstract}
  We present a version of the fundamental theorem of asset pricing
  (FTAP) for continuous time large financial markets with two
  filtrations in an $L^p$-setting for $ 1 \leq p < \infty$. This
  extends the results of Yuri Kabanov and Christophe Stricker
  \cite{KS:06} to continuous time and to a large financial market
  setting, however, still preserving the simplicity of the discrete
  time setting. On the other hand it generalizes Stricker's
  $L^p$-version of FTAP \cite{S:90} towards a setting with two
  filtrations. We do neither assume that price processes are
  semi-martigales, (and it does not follow due to trading with respect
  to the \emph{smaller} filtration) nor that price processes have any
  path properties, neither any other particular property of the two
  filtrations in question, nor admissibility of portfolio wealth
  processes, but we rather go for a completely general (and realistic)
  result, where trading strategies are just predictable with respect
  to a smaller filtration than the one generated by the price
  processes. Applications range from modeling trading with delayed
  information, trading on different time grids, dealing with
  inaccurate price information, and randomization approaches to
  uncertainty.
\end{abstract}

\thanks{The authors gratefully acknowledges the support from ETH-foundation. Parts of this paper were written while the first author was visiting ETH Z\"urich; she is grateful to
the Forschungsinstitut f\"ur Mathematik.} 
\keywords{Fundamental theorem of asset pricing, large financial markets, filtration shrinkage, Bayesian Finance, Robust Finance} \subjclass[2000]{60G48, 91B70, 91G99}

\author{Christa Cuchiero, Irene Klein and Josef Teichmann}
\address{Vienna University, Oskar-Morgenstern-Platz 1, A-1090 Vienna
  and ETH Z\"urich, R\"amistrasse 101, CH-8092 Z\"urich}
\maketitle

\section{Introduction}\label{sec:intro}

One of the often unanimously accepted hypothesis in modeling financial
markets is the following:

\subsection*{Standard Hypothesis} \emph{Observations of prices are
  perfect and can be immediately included into trading decisions.}

\bigskip

\noindent It is the goal of this article to consider a setting beyond
this hypothesis: imagine a stock exchange with liquid prices given in
continuous time, whose informational content is encoded in a large
filtration $ \mathbb{G}$ and whose price processes are modeled by a
stochastic process $S$ adapted to this filtration. However, like in
Platon's famous allegory of the cave, the prices $S$ are not fully
revealed to us observers of the market, but only a shadow of them is
visible for us traders, i.e.~a perturbed observation of $S$. Nevertheless
we (have to) trade in the market given our observational basis. We
call this a \emph{platonic stock exchange}.

We believe that the platonic stock exchange helps to encode the
idealistic assumption of continuous time models and the actual
observational reality in terms of a combined filtering and trading
model. There are several instances, where the platonic stock exchange
can be directly applied:
\subsection*{:} Prices come on a discrete grid possibly with a certain
degree of reliability, hence the observational filtration is smaller
than the idealistic continuous time model filtration of the price
process.
\subsection*{:} Prices can additionally come with frictions
(transaction costs or liquidity), where the actually traded prices and
the observed prices, upon which trading decisions are based, do not
necessarily agree.
\subsection*{:} The setting of stochastic portfolio theory, where
relative capitalizations (prices relative to the market capitalization
taken as num\'eraire) are quoted on different orders of magnitude with
different precision and with different degrees of friction. Here
again, actually traded prices and observed prices do not necessarily
agree.
\subsection*{:} Market models jointly written for underlyings and
derivatives, where prices often come on different time grids,
e.g.~derivatives might be traded on a daily basis whereas underlyings
are usually traded on a much finer grid. It is useful to introduce a
difference between actually observed prices and traded prices.
\subsection*{:} A time delay in receiving market information (or in
applying it), which causes the trader's filtration simply being a
delayed one in comparison to a price filtration. For instance,
different time scales of trading for underlyings, which are traded on
a high frequency basis, occur in all major markets.
\subsection*{:} Markets with an execution delay of orders.
\subsection*{:} To quantify the effect of calibration errors, which
appear as additive error variables on market data. For instance in
term structure models market prices are only met approximately, which
is a consequence of (simple) inter- or extrapolation procedures.
\subsection*{:} ... and, of course, the setting of model uncertainty
combined with realistic market information as outlined in the sequel.
\bigskip

Formally speaking a \emph{platonic financial market} is given by a
stochastic basis $(\Omega,\mathcal{G},P) $ together with two
filtrations $ \mathbb{F} \subset \mathbb{G} $ and a family of
stochastic processes $S$ (without any assumption on path properties)
adapted to $\mathbb{G}$. Trading in the given assets is possible but
only with $\mathbb{F}$-predictable, simple strategies (and limits of
such strategies, which is made precise later). The smaller filtration
$ \mathbb{F} $ corresponds to the agent's information which actually
enters into trading decisions, whereas the larger filtration
$\mathbb{G} $ encodes all information from price processes which is
not necessarily instantaneously available for trading decisions. The precise setting that we introduce in Section 2 is even more general in terms of the involved filtrations, but for the sake of simplicity we do not enter into details here. We
emphasize (and outline this more precisely in the sequel) that this framework is
also a way to incorporate model \emph{uncertainty}, since a lack of
observations actually leads to uncertainty in the model choice
itself. In any case the measure $P$ does not necessarily have the
meaning of a historical measure: one more natural interpretation is a
randomization by a subjective prior of a class of models among which
one cannot distinguish by actual observations.

As a novelty here we do neither assume that the price processes are
semi-martingales (and there is also no reason to do so since trading
is only using strategies predictable with respect to the smaller
filtration $\mathbb{F}$), nor do we pose a standard filtering problem
by filtering a ``true price". We consider the non-semimartingality
(and the absence of path properties) of price processes as a
particular challenge as well as an advantage of our approach, for
instance in view of explaining well known econometric evidence, as we
encounter it for instance in high frequency data. 

The main goal of the
present article is to investigate all foundational questions of
Mathematical Finance, i.e.~fundamental theorems, superhedging and
duality in this new setting of two filtrations.
Our main result states that a certain ``No-arbitrage'' condition (see
Definition \ref{CpNAFLp}) for \emph{large} financial markets in a two filtration setup, which
we call (NAFLp), is equivalent to the existence of an equivalent
measure under which the optional projections of price processes $S$ on
the smaller filtration $\mathbb{F}$ are martingales (see Theorem
\ref{ftap-p} and Corollary \ref{ftap-1}). Our ``No-arbitrage'' condition
involves $L^p$-integrability and $L^p$-convergence of terminal portfolio
values with respect to some measures $P'$, for $ 1 \leq p < \infty $,
and turns out to be equivalent to the classical (NFLVR) condition in the case of one filtration and bounded price 
processes. The concept is appealing since topologies are actually
quite strong, economically reasonable and no further admissibility
assumptions are needed.

To the best of our knowledge this is the first \emph{fundamental
  theorem of asset pricing} (FTAP) in continuous time when trading
only with respect to a smaller filtration is considered. It thus
extends the results of Y.~Kabanov and C.~Stricker \cite{KS:06} to
continuous time and to a large financial market setting. On the other
hand it generalizes Stricker's $L^p$-version of FTAP \cite{S:90}
towards a setting with two filtrations. This $L^p$-setting allows to
tranfer the simplicity of discrete time Mathematical Finance to
continuous time, i.e.~no assumptions on paths and no assumptions on
admissibility. Our setup also allows to overcome the disadvantage of
Stricker's setting that the no-arbitrage condition depends on the
measure and not only on its equivalence class, however, there are
different ways to do so: we have therefore provided several competing
(and equivalent) versions of our ``No-arbitrage'' condition. The
obtained two filtration FTAP then constitutes the basis for
superhedging results, where trading is again only allowed with respect
to the smaller filtration. As shown in Example \ref{fin}, our setting
naturally embeds semi-static hedging. Combined with \emph{Bayesian
  uncertainty modeling} as explained below these superhedging results
give a new flavor to the corresponding results in the area of robust
finance (see e.g. the work of Bruno Bouchard and Marcel Nutz~\cite{BouchardNutz2015}).

The remainder of the article is organized as follows. The following
Subsection \ref{sec:Bayes} introduces the concept of \emph{Bayesian
  uncertainty modeling}, while Subsection \ref{sec:Examples} underpins
the relevance of our approach by means of examples. In
Section~\ref{sec:setting} we introduce the formal setting of platonic
large financial markets, while Section~\ref{main} and
\ref{sec:super-replication} are dedicated to ``No-arbitrage''
definitions, FTAP and superheding results.

\subsection{Bayesian uncertainty modeling} \label{sec:Bayes}

The above described two filtration setting is the first step towards a \emph{dynamic}
  framework for modeling uncertainty where a real-time inclusion of
new information and thus a decrease (or increase) in model uncertainty
can be analyzed. We understand uncertainty here from a Bayesian
viewpoint, i.e.~we randomize over different measures \footnote{Josef
  Teichmann is grateful to Dima Kramkov for many discussions during
  morning runs about this viewpoint.}.  We refer to this as
\emph{Bayesian uncertainty modeling} and call the whole approach \emph{Bayesian Finance},
since optional projections will play a key role such as in Bayesian
Filtering.

Let us outline what we mean by \emph{Bayesian uncertainty
  modeling} and how this is naturally linked to a two-filtration
framework: assume here some path space $D$ together with its canonical
filtration $ \widetilde{\mathbb{F}}$ and a family of (canonical) price
processes $ S $, adapted to $\widetilde{\mathbb{F}}$. Furthermore we
are given a family of probability measures $ {P}^\theta $ for a
parameter $ \theta \in \Theta $. We assume an a priori given
probability measure $ \nu $ on $ \Theta $ (of course we could at this
point also introduce some time dependence on $\Theta$ but we leave
this away for the sake of simplicity). Formally, our two filtration
setup can be introduced in the following way: consider first on
$ D \times \Theta $ the (full information) filtration
$ \mathbb{G} := \widetilde{\mathbb{F}} \otimes \mathcal{B}(\Theta) $,
and the probability measure
\[
  \P(A \times B) = \int_B {P}^\theta(A) \nu(d \theta) \, .
\]
Of course price processes can also be defined on $ D \times \Theta $,
namely simply via $S_t(\omega,\theta)=\omega(t)$. In the case of
perfect information we find ourselves again in a classical setting as
far as trading is concerned, since the trader could use the
information generated by $ S $.

We now encode the very nature of market data, i.e.~the basis for
trading and for calibration, which usually come
\begin{itemize}
\item on a discrete, not necessarily equidistant and possibly
  unpredictable time grid;
\item with an additional degree of non-reliability;
\item with an idea on acceptable calibration accuracy (i.e.~liquid
  prices should be perfectly calibrated, less liquid ones less
  perfectly, etc);
\end{itemize}
by specifying a smaller filtration
$ \mathcal{F}_t \subset \sigma\big({(S_s)}_{0 \leq s \leq t}\big)
\subset \widetilde{\mathcal{F}}_t \otimes \mathcal{B}(\Theta) $, for
$ t \geq 0 $, on the product probability space, which corresponds to
the information of the actual observations being often strictly
smaller than the filtration generated by $S$. Even if data are fully
reliable, already the discrete character of observations causes a
filtration shrinkage. If data are additionally not fully reliable (due
to market frictions or simple observational issues or delays), the
smaller filtration corresponds to the actual observation whereas the
larger filtrations account for model uncertainty and additionally for
some noise perturbing, e.g.~additively, the market data. Now we find
ourselves in a full-fledged two-filtration setting (actually a three
filtration setting: full information on price process $S$ and
$\theta$, only full information on $S$ and actual observation
information on $S$), where trading strategies are predictable with
respect to a smaller filtration $ \mathbb{F} $ than the one generated
by the actual (idealistic) price processes $S$, and where full
knowledge on price processes and perturbing noise is encoded in the
largest filtration $ \mathbb{G} $. Hence we can analyze in this
setting how market observations lead to less uncertainty on the one
hand, i.e.~updates of $\nu$, and how market observations are used for
trading on the other hand.

This  Bayesian uncertainty setting can be simplified to a 
\emph{mixture model} setting where we work in contrast to the above one not on
the product space $D \times \Theta$ but on $D$ itself and consider a
probability measure of the form
\[
  \P(A) = \int_\Theta {P}^\theta(A) \nu(d \theta).
\]
While this setting already allows to give answers to many questions coming up in robust finance (see Example \ref{fin}), it can for instance not include derivatives whose payoff depends on the parameter $\theta$ (see Example \ref{ex:thetadepence}). Note that not only the name of this setup but the whole modeling is in spirit of mixture models as considered by Damiano Brigo and Fabio Mercurio \cite{BM:02}.

This Bayesian approach to uncertainty encodes subjective believes on
$ \Theta $ as probabilities, which are by no means probabilities in
the sense of risk with respect to the physical measure. However, in
finance we are anyway accustomed to probabilities, which do not
necessarily have the meaning of statistical risk, namely, risk-neutral
probabilities. Here we add a third dimension what a probability
actually means, namely a subjective belief in the validity of a model.

Of course, we are not the first ones to introduce Bayesian viewpoints
in Finance, see, e.g. the pioneering paper by Per Mykland \cite{mykland2003} or the
large literature on dynamic risk measures (e.g., the work by Beatrice Acciaio and Irina Penner
\cite{AcciaioPenner11} and the references therein) where updating and
time consistency plays a major role. In this respect we refer in
particular to \cite{CherKu11, Riedel04, BaeuerleMundt08}.  However, up
to our knowledge we are the first ones to combine Bayesian methods
consequently with FTAP and duality concepts. In other words, a long
term goal is a combination of the two seminal works \cite{KS:06} and \cite{rog:14}.

\subsection{Examples}\label{sec:Examples}

The following examples illustrate the universal applicability of our
two-filtrations setting and its strength in combination with Bayesian
uncertainty modeling.

\begin{example} [\textbf{Finitely many assets with a large
family of semi-statically traded options}]\label{fin}
Suppose we are in the setting of finitely many liquidly traded assets,
i.e.~we consider a finite number of adapted processes $S^1,\dots S^n$
adapted to a filtration $ \mathbb{F} $, together with a (large) family
of option price processes $\pi(f^j)$ paying off at time $t=1$ the
payoff $f^j$, for $ j \in J $ with $ J $ some index set. Let us be
more precise here: the process $ {(\pi_t(f^j))}_{t \in I^j}$ comes
with a time grid $ I^j \subset [0,1] $ being a finite union of
disjoint closed intervals, where the option is actually traded. A
particular example is $I^j=\{t^j_1,\ldots, t^j_n\}$ meaning that the
option $j$ can be only traded at times $t^j_k$,
$k \in \{1, \ldots, n\}$.

Additionally we apply the mixture setting of Bayesian uncertainty as
in the last part of Subsection~\ref{sec:Bayes}: consider a family of
probability measures $ {P}^\theta $ for a parameter
$ \theta \in \Theta $, where $ \Theta $ carries an \emph{a priori
  measure} $ \nu $. Assume that
\[
  \P(A) = \int_\Theta {P}^\theta(A) \nu(d \theta)
\]
is well-defined. Notice that nullsets with respect to this mixture
measure are nullsets with respect to $\nu$-almost every $P^\theta$.

The filtration $ \mathbb{G} $ is just chosen to be the constant
filtration $ \mathcal{G}_t:=\mathcal{F}_1 $, for $ t \in
[0,1]$. Assume that prices $S^i_t$ on
$(\Omega,\mathcal{G}, (\mathcal{G}_t)_{t\in[0,1]},P)$ lie in some
$L^p(\Omega)$ with trading based on $\mathbb{F}$. With respect to this
filtration we can of course extend the price to a $\mathbb{G}$-adapted
process on $ [0,1] $ via
\[
  \pi_t(f^j) := \pi_{\min \{ s \geq t \, | \; s \in I^j\}}(f^j) \, .
\]
This process is \emph{not} adapted anymore to the filtration
$ \mathbb{F} $ but rather to $ \mathbb{G} $. Since we do not require
path properties it does not matter that actually this process is
c\`agl\`ad.

The advantage of this construction is that semi-static hedging on
$ I^j $ is now expressible just via the standard stochastic
integral\footnote{We are grateful to Matteo Burzoni and Martin Larsson
  for pointing this out.}. Here $ \mathbb{G} $ needs to be a
filtration as large to contain already information on the values of
the options at the next future trading
time.
Since we do not have any assumptions on $\mathbb{G}$ this can of
course be assumed and our chosen filtration clearly does the job.

The super-replication result from Section \ref{sec:super-replication}
then reads as follows: for every $f \in L^p(\P)$
$$\sup_{Q\in\mathbb{M}^{q}}E_Q[f]=\inf\{x: \exists \,
g\in \overline{C_{p}}\text{ with } x+g \geq f\} \, .$$ The set
$ \mathbb{M}^q $ consists of measures $ Q \sim \P $,
$\frac{dQ}{d\P} \in L^q(\P) $ such that
\[
  E_Q[\pi_t(f^j)\, | \, \mathcal{F}_s]=\pi_s(f^j)
\]
for $ s \leq t $ in $I^j$ (sic!) and for all $ j \in J $, and, of
course, such that the optional projection (which conincides with $S$
itself) of every price process $S^i$ on $\mathbb{F}$ is a
$ Q$-martingale. On the other hand the super-hedge is understood
$\P$-almost surely, i.e.~for each $f \in L^p(\P)$ there are sequences
of simple trades in $ S $ and semi-static trades in finitely many
options minus some consumptions converging in $L^p(\P)$ to a limit $g$
which dominates $f$ minus the super-hedging price $x$ $\P$-a.s.,
i.e. on a measurable set $A\subset\Omega$ satisfying
$ P^\theta (A) = 1 $ for $\nu$-almost all $\theta$ (for details see
Section~\ref{sec:super-replication}).  This is a first super-hedging
result in continuous time with a large family of options traded on
different time grids in a robust setting interpreted in a Bayesian
way.
\end{example}

\begin{remark}
  We could of course have considered a full-fledged Bayesian setting
  of robustness where also the price processes are not adapted to the
  filtration $ \mathbb{F} $.
\end{remark}

\begin{remark}
  The trick to include semi-static hedges by writing piece-wise
  constant processes with anticipating information (depending on the
  amount of static properties the hedging should have: if one wants to
  have static hedging on the interval $ ]s,t] \subset [0,1] $, then
  time $t$ information must already be present at $s+$ which due to
  our generality of $ \mathbb{G}$ is feasible) works in general and
  one can therefore model \emph{trading on different grids for each
    single asset} within a two filtration setting. This makes our
  setting extremely general in scope. This is in line with discrete
  time small financial market setting in \cite{BouchardNutz2015}.
\end{remark}

\begin{example} [\textbf{finitely many assets with a large family of
    semi-statically traded options and uncertainty swaps}]\label{ex:thetadepence}

  Assume the setting of Bayesian uncertainty, as of Section
  \ref{sec:Bayes}, which allows to include artificial derivatives
  $ \pi(f^j) $, for $ j \in J $ being traded at time $t=0$ at an
  $\mathcal{F}_0$-measurable price.  We imagine here payoffs $ f^j $
  which depend on the uncertainty parameter $ \theta $ and call them
  \emph{uncertainty swaps} (even though they do not necessarily have a
  price $ 0 $ at time $ t=0$). These swaps represent risks related to
  uncertainty, i.e.~how likely certain areas of $ \Theta $ are. One
  can think of uncertain volatility, i.e.~the parameter $ \theta $
  represents all possible volatility configurations in the market, or
  of risks not fully reflected in price behavior like, e.g.,
  temperature in energy markets.
\end{example}

\begin{example} [\textbf{An asset with uncertain price}]

  Probably the simplest and most extreme example of a two filtration
  setting is the following: assume a standard price process $Y$ on a
  filtered probability space $(\Omega,\mathbb{G},P)$, for instance a
  Black-Scholes or Heston price model, and assume additionally the
  existence of a centered (i.e.~expectations with respect to $P$ are
  vanishing), uniformly bounded $\mathbb{G}$-adapted process $ Z $
  fully independent of $ Y $ whose values at each time are
  \emph{independent} of all other values at any time, as it is often
  assumed when modeling micro structure noise. Of course $Z$ cannot
  have any reasonable path properties. Define $ S:=Y+Z$, then the
  price looks like $Y$ when observed over some time interval, but
  trading might lead to surprises. Our setting allows to derive
  super-replication results in such extreme but realistic cases.
\end{example}

\section{Large Platonic financial markets}\label{sec:setting}

We consider a large platonic financial market model in continuous time
in the following way. Let $I$ be an arbitrary parameter space which
can be any set, countable or uncountable. Let $T=1$ denote the time
horizon and let $(\Omega,\mathcal{G},P)$ be a probability space with a
filtration $\mathbb{G}=(\mathcal{G}_t)_{t\in[0,1]}$. On this
probability space we are given a family of $\mathbb{G}$-adapted
stochastic processes $(S^i_t)_{t\in[0,1]}$, $i\in I$. In particular no
path properties are needed. $P$-almost surely is understood with
respect to $ \mathcal{G} $ being the largest $ \sigma $-algebra in
this setting.

We define, for each $n\geq 1$, a family $\mathcal{A}^n$ of subsets of
$I$, which contain exactly $n$ elements:
\begin{equation}\label{eq:n-sets}
  \mathcal{A}^n=\{\text{all/some subsets }A\subseteq I\text{, such that $|A|=n$}\},
\end{equation}
where $|A|$ denotes the cardinality of the set $A$. Moreover, we
assume that if $A^1,A^2\in \bigcup_{n\geq 1}\mathcal{A}^n$, then
$A^1\cup A^2\in \bigcup_{n\geq 1}\mathcal{A}^n$ (\emph{refining
  property}).

We consider a family of filtrations
$\mathbb{F}^A=(\mathcal{F}^A_t)_{t\in[0,1]}$, indexed by
$A\in\bigcup_{n\geq 1}\mathcal{A}^n$, which are all contained in (and
possibly smaller than) the filtration
$\mathbb{G}=(\mathcal{G}_t)_{t\in[0,1]}$ introduced
above. Additionally we suppose for two sets
$A^1, A^2\in\bigcup_{n\geq 1}\mathcal{A}^n$, such that
$A^1\subseteq A^2$, that $\mathbb{F}^{A^1}\subseteq \mathbb{F}^{A^2}$,
i.e., for each $t$,
$\mathcal{F}^{A^1}_t\subset \mathcal{F}^{A^2}_t\subseteq
\mathcal{G}_t$ (\emph{monotonicity property}).

For each $A\in\bigcup_{n\geq 1}\mathcal{A}^n$ we define the following
set of portfolio wealth processes based on simple strategies for
deterministic time points in the \emph{small financial market $A$}
that are predictable with respect to the smaller filtration
$\mathbb{F}^A=(\mathcal{F}^A_t)_{t\in [0,1]}$. To be precise:
\begin{definition}\label{XA}
  Let $t_0,\dots,t_l\in[0,1]$ denote a set of deterministic time
  points and consider a small market indexed by
  $A=\{\alpha_1,\dots,\alpha_n\}\in \mathcal{A}^n$. Denote
  $\mathbb{F}^A$-simple and bounded processes by
  \[
    \mathbf{H}^A=\sum_{i=1}^{l}H^{A}_{t_{i-1}}1_{(t_{i-1}, t_i]}
  \]
  with
  $\mathbf{H}^{A}_{t_{i-1}}=(H^{\alpha_1}_{t_{i-1}}, \ldots,
  H^{\alpha_n}_{t_{i-1}})^{\top}\in \mathcal{F}^{A}_{t_{i-1}}$ for all
  $i \in \{1, \ldots, l\}$.  Then the set of simple portfolio wealth
  processes obtained from bounded, $\mathbb{F}^{A}$-simple trading
  strategies is defined as
$$\mathcal{X}^A=\{(\mathbf{H}^A\cdot \mathbf{S}^A)_{t \in [0,1]}: \mathbf{H}^A\text{ $\mathbb{R}^n$-valued, bounded, $\mathbb{F}^A$-simple}\},
$$
where $\mathbf{S}^{A}=(S^{\alpha_1}, \ldots, S^{\alpha_n})^{\top}$ and
$$(\mathbf{H}^A\cdot \mathbf{S}^A)_{t}=
\sum_{j=1}^n\sum_{i=1}^lH^{\alpha_j}_{t_{i-1}\wedge
  t}(S^{\alpha_j}_{t_i\wedge t}-S^{\alpha_j}_{t_{i-1}\wedge t}),$$
meaning that trading is done in an $\mathbb{F}^A$-predictable way.
\end{definition}

Next we define the set $\mathcal{X}^{n}$ of all portfolio wealth
processes with respect to simple strategies that include at most $n$
assets (but all possible different choices of $n$ assets). Indeed, for
each $n\geq 1$, we consider the following set $\mathcal{X}^n$
\begin{equation}\mathcal{X}^{n}=\bigcup_{A\in\mathcal{A}^n}\mathcal{X}^{A}.
\end{equation}
Note that the sets $\mathcal{X}^{n}$ are neither convex nor do they
satisfy a concatenation property in the sense of \cite{kab:97},
because in both cases $2n$ assets could be involved in the
combinations. Therefore the result would rather be in the larger set
$\mathcal{X}^{2n}$ than in $\mathcal{X}^{n}$.

\begin{definition}\label{largeX1}
  We introduce the convex sets of ($\mathbb{F}$-simple) portfolio
  wealth processes, its terminal evaluation and the convex cone of all
  super-replicable claims:
  \begin{enumerate}
  \item Define the set of all wealth processes defined on
    $\mathbb{F}$-simple strategies involving a finite number of assets
    in the large financial market as
    $\mathcal{X}=\bigcup_{n\geq 1}\mathcal{X}^n$.
  \item We denote by $K_0=\{X_1: X\in\mathcal{X}\}$ the evaluations of
    elements of $\mathcal{X}$ at terminal time $T=1$.
  \item We denote by $C$ the convex cone of all super-replicable
    claims (by $\mathbb{F}$-simple strategies) in the large financial
    market, that is,
$$C=K_0-L_+^0(\Omega, \mathcal{G},P).$$
\end{enumerate}
\end{definition}

\begin{remark}
  So far our setting is not only completely general but also
  extraordinarily realistic in the sense that it can fully capture all
  desired features mentioned in the introduction, in particular
  trading with delay and market frictions. Note that we do not need to
  assume any path properties for price processes.
\end{remark}

Note that the above setting includes as examples the large financial
market based on a sequence of assets as in the work of Marzia DeDonno, Paolo Guasoni and Maurizio Pratelli \cite{DonnoGuasoniPratelli05} as well as bond markets (with a
continuum of assets), with trading as specified in Definition
\ref{XA}.  For a more detailed discussion see
\cite{CuchieroKleinTeichmann}.

\section{No asymptotic $L^p$-free lunch and FTAP}\label{main}

For $1\leq p<\infty$ we denote in the sequel
$L^p(\Omega,\mathcal{G},P)=L^p(\mathcal{G},P)$. Moreover, for some set
$E \subset L^p(\mathcal{G},P)$ we denote by
$\overline{E}=\overline{E}^{\| \cdot\|_{L^p(\mathcal{G},P)}}$ the
$L^p(\mathcal{G},P)$-closure of $E$.

The crucial assumption which allows us to work in an $L^p$-setting, as Stricker did in his work~\cite{S:90} in the setting of one filtration and small markets, is:

\begin{assumption}\label{lp}
  For some fixed $p$, $1\leq p<\infty$, we denote by $\mathcal{P}_p$
  the following set of measures
  \[
    \mathcal{P}_p =\{ P' \sim P \, | \, S^i_t\in L^p(\mathcal{G},P')
    \text{ for all } i\in I, t\in[0,1] \}.
  \]
  We assume that $ \mathcal{P}_p \neq \emptyset $, i.e. that there is
  an equivalent probability measure $P' \sim P$ such that
  $S^i_t\in L^p(\mathcal{G},P')$, for all $i\in I$, $t\in[0,1]$.
\end{assumption}

Note that in the case of countably many assets, i.e., when $I$ is
countable, Assumption~\ref{lp} is always satisfied (for each $p$).

\begin{remark}\label{weakerass}
  Assumption~\ref{lp} can be slightly weakened. It is enough to assume
  that there exists some $P' \sim P$ such that
  $(S^i_u-S^i_t)^-\in L^p(\mathcal{G},P')$, for all $i\in I$,
  $t\leq u\in[0,1]$ for some fixed $1\leq p<\infty$ when we only
  consider long-only investments in the assets. The corresponding
  result will then be slightly weaker in the sense that we will only
  get a measure such that the optional projections are
  supermartingales (and not martingales as in the case of
  Assumption~\ref{lp}). We refer to Section \ref{main_weakened} for
  the corresponding result.
\end{remark}

We can now define a notion of absence of arbitrage, notably without
applying stochastic integration, which -- at this point -- would not
be available in full generality.

\begin{definition}\label{CpNAFLp}
  We say that the large financial market satisfies the condition {\it
    no asymptotic $L^p$-free lunch} (NAFLp) if there is a probability
  measure $P'\sim P$ as in Assumption~\ref{lp} such that the following
  holds:
  \begin{align}\label{NA}
    \overline{C_p(P')}\cap L^p_+(\mathcal{G},P')=\{0\},
  \end{align}
  where $C_p(P')=C\cap L^p(\mathcal{G},P')$ with $C$ introduced in
  Definition \ref{largeX1} (iii).
\end{definition}

From \cite{KS:06} we know that $C$ is closed in $L^0$ when dealing with finitely many assets and finitely many trading times. Hence, elements in $\overline{C_p(P')}$  which do not lie in $C_p(P')$ necessarily involve infinitely many assets and \slash or infinitely many trading times.

\begin{remark}\label{cp=k0minuslp}
  \begin{enumerate}
  \item It is obvious that $C_p(P')\neq \emptyset$, since strategies
    are bounded. Indeed, $K_0\subseteq L^p(\mathcal{G},P')$ and
    therefore
  $$C_p(P')=K_0-L^p_+(\mathcal{G},P').$$
\item It would also be possible to consider the set
  $\mathcal{C}_p:=C\cap \bigcap_{P' \in \mathcal{P}_p}
  L^p(\mathcal{G},P')$. Since
  $K_0\subseteq \bigcap_{P' \in \mathcal{P}_p}L^p(\mathcal{G},P')$, we
  have similarly as above
  \[
    \mathcal{C}_p:=K_0 - \bigcap_{P' \in \mathcal{P}_p}
    L_+^p(\mathcal{G},P').
  \]
  For some fixed $P'$, it then holds that the
  $L^p(\mathcal{G},P')$-closures of $C_p(P')$ and $\mathcal{C}_p$ are
  the same. Indeed if $g$ in the closure is the $L^p(P')$-limit of
  $g_n=f_n-h_n$ where $f_n\in K_0$ and $h_n\geq0$ we can always choose
  $h_n\in L^{\infty}_+(\mathcal{G})\subseteq \bigcap_{\widetilde{P}
    \in \mathcal{P}_p} L_+^p(\mathcal{G},\widetilde{P})$ as
  $L^{\infty}_+(\mathcal{G})$ is dense in $L^{p}_+(\mathcal{G}, P')$
  for the $L^p(P')$-norm. 

Furthermore, \eqref{NA} is equivalent to
  \begin{align}\label{alt_intersec}
    \overline{\mathcal{C}_p}^{L^p(\mathcal{G}, P')} \cap \bigcap_{\widetilde{P} \in \mathcal{P}_p}L_+^p(\mathcal{G},\widetilde{P})=\{0\} \, .
  \end{align}
  Indeed, suppose \eqref{alt_intersec} holds but \eqref{NA} does not hold. By the
  above there is  $g\in \overline{\mathcal{C}_p}^{L^p(\mathcal{G}, P')}$ with
  $g\geq 0$, $g\neq0$, which is the $L^p(P')$-limit of
  $g_n=f_n-h_n\in \bigcap_{\widetilde{P} \in
    \mathcal{P}_p}L^p(\mathcal{G},\widetilde{P})$. Then, clearly
  $\widetilde{g_n}=g_n-(g_n-1)\ind_{\{g_n\geq1\}}\in
  \bigcap_{\widetilde{P} \in
    \mathcal{P}_p}L^p(\mathcal{G},\widetilde{P})$ as well and
  converges in $L^p(\mathcal{G},P')$ to
  $g\land 1$ which lies in  $\cap_{\widetilde{P} \in
    \mathcal{P}_p}L^p_+(\mathcal{G},\widetilde{P})\setminus\{0\}$,
  yielding thus a contradiction. The other direction is clear.
\end{enumerate}
\end{remark}

The following example illustrates that the choice of
$\overline{C_p(P')}$ instead of $\overline{K_0}$ in the definition of
(NAFLp) is crucial beyond the setting of small financial markets in
discrete time.

\begin{example}
  A careful reading of Example~3.3 of \cite{S:94} shows that it is not
  possible to replace $\overline{C_p(P')}$ by $\overline{K_0}$ in the
  definition of (NAFLp). Indeed, let $p=1$ and consider a one period
  market with countably many derivatives given at time $t=1$ by the
  random variables $ f_n$ of the Example~3.3 in \cite{S:94} and at
  price $0$ at time $ t=0$. Again, we can create, as in the
  introduction, a two filtration setting, where hedging is actually
  buy \& hold in this large financial market, and the filtration
  $ \mathbb{F} $ is trivial. In this setting $ K_0 $ contains all
  (finite) linear combinations of $ f_n $. As in \cite{S:94} we can
  show that $g_n=\sum_{k=1}^nf_k\in K_0$ is bounded below by -1, for
  each $n$, and $P'(g_n\geq 1)\to1$. Hence
  $\tilde{g_n}=g_n-(g_n-1)\ind_{\{g_n\geq 1\}}\to 1$ in $L^1$ by
  dominated convergence. Therefore $1\in \overline{C_1(P')}$ and
  \eqref{NA} is not satisfied. However, analogously as in \cite{S:94}
  we can show that $\overline{K_0}\cap L^1_+(\mathcal{G}, P')=\{0\}$.
\end{example}

\begin{remark}
  We emphasize that we do not assume any admissibility for our
  portfolio wealth processes, instead we assume $L^p$-integrability
  with respect to a measure $P'$ equivalent to the physical measure
  $P$. This follows the setting of \cite{S:90}. However it does
  \emph{not} share the disadvantage that the respective no arbitrage
  condition depends on $P$ itself, but only on the equivalence class
  of $P$ which is a desirable feature and in particular the case for
  the classical NFLVR condition introduced by Freddy Delbaen and Walter Schachermayer in 
  \cite{DS:94}. Furthermore we do not need a stochastic integration
  theory at hand, which, in our general setting and in contrast to the
  setting of \cite{S:90} is not (yet) available.
\end{remark}

Fix now $ 1< q \leq \infty $ dual to $p$, i.e.
$\frac{1}{p} +\frac{1}{q}=1$, for $p$ given in Assumption \ref{lp}.
We define the set of $L^q$-probability measures for which the optional
projection of the process $(\mathbf{S}^A_t)$ with respect to the
filtration $\mathbb{F}^A$ is a martingale, for all finite subsets $A$
of $I$, as follows:
\begin{equation*}
  \begin{split}
    \mathcal{M}^q =\{&Q\sim P \, |\, \exists \, P' \in \mathcal{P}_p
    \text{ s.t. } \frac{dQ}{dP'}\in L^q(\mathcal{G},P') \text{ and
    }E_Q[S^{\alpha_i}_u|\mathcal{F}^A_t]=
    E_Q[S^{\alpha_i}_t|\mathcal{F}^A_t]\text{ a.s.,}\\ &\text{for all
    }A=\{\alpha_1, \ldots, \alpha_l\} \in\bigcup_{n\geq
      1}\mathcal{A}^n, 1\leq i\leq l\text{ and all }t\leq u\in[0,1]\}.
  \end{split}
\end{equation*}

Moreover, for $q=1$, we define the analogous set of equivalent
probability measures without additional property on the $q$th moments
of the Radon Nikodym density, i.e.,
\begin{align*}
  \mathcal{M}^1 =\{&Q\sim P \,|\, E_Q[S^{\alpha_i}_u|\mathcal{F}^A_t]=
                     E_Q[S_t^{\alpha_i}|\mathcal{F}^A_t]\nonumber\text{ a.s.,}\nonumber\\ &\text{for all }A=\{\alpha_1,\dots,\alpha_l\}\in\bigcup_{n\geq 1}\mathcal{A}^n, 1\leq i\leq l\text{ and all }t\leq u\in[0,1]\}
\end{align*}

Let us remark that the only instance where the filtrations $\mathbb{F}^A$ introduced in Section 2 actually occur explicitly is in the above definition  of the dual objects $\mathcal{M}^q$ for $ 1 \leq q \leq \infty $.

\begin{remark}\label{convexity}
  Clearly, $ \mathcal{M}^1 $ is convex. For $ 1 < q \leq \infty $, the
  sets $ \mathcal{M}^q $ are convex as well. Indeed, for this purpose
  we consider the following slightly more general statement: For all
  $ Q_i \ll P $ with $ \frac{dQ_i}{dP_i} \in L^q(\mathcal{G}, P_i) $
  for measures $ P_i \sim P $, $ i\in \{1,2\}$, the convex
  combinations $ Q_s:= sQ_2 + (1-s)Q_1 $ satisfy
  $ \frac{dQ_s}{d \widetilde P} \in L^q(\mathcal{G},\widetilde P) $
  for $ 0 \leq s \leq 1 $, where
  $ \widetilde P = \frac{1}{2} (P_1 + P_2) $. Suppose
  $ 1 < q < \infty $, since the assertion is clear for $ q = \infty
  $. Indeed we have
  \[
    E_{\widetilde P}\bigg[ {\left|\frac{dQ_i}{d \widetilde
          P}\right|}^q \bigg] \leq 2^{q-1} E_{\widetilde P}\bigg[
    {\left|\frac{dQ_i}{d \widetilde P}\right|}^q {\left|\frac{d
          \widetilde P}{d P_i}\right|}^{q-1} \bigg] = 2^{q-1}
    E_{P_i}\bigg[ {\left|\frac{dQ_i}{d P_i}\right|}^q \bigg] < \infty
  \]
  for $ i = 0,1 $, since $ 2 \frac{d\widetilde P}{dP_i} \geq 1 $. The
  rest follows by the triangle inequality.
\end{remark}

\begin{remark}\label{stronger_topologies}
  The above convexity assertion is related to the fact that locally
  convex vector spaces formed as intersection of spaces
  $ L^p(\mathcal{G}, P') $, where $ P' \sim P $ runs over a set of
  probability measures subject to additional constraints (e.g., as in
  our case probability measures $P'$ such that all price processes are
  $p$-integrable), has as a (strong) dual space the union of
  $L^q(\mathcal{G},P')$ with respect to the same family of measures
  $P'$. The corresponding topologies are the projective and injective
  locally convex topologies, i.e.~the initial and final topologies
  making all canonical maps continuous. Let us formulate this more
  directly in case of $X:=\cap_{P' \in \mathcal{P}_p} L^p(P') $: note
  that $ \mathcal{P}_p$ is a directed set inheriting its (reflexive,
  transitive and anti-symmetric) relation ``$\leq$'' from reversing
  the inclusion of the spaces $L^p(\mathcal{G},P')$.  Indeed,
  $P' \leq \hat{P}$, if
  $L^p(\mathcal{G}, \hat{P}) \subseteq L^p(\mathcal{G},P')$.  For
  $P' \leq \hat{P}$, consider the inclusions from
  $L^p(\mathcal{G},\hat{P}) \to L^p(\mathcal{G},P')$. Then, $X$ is the
  projective limit with respect to these mappings. The topology of $X$
  is now the coarsest topology on $X$ which makes the inclusion maps
  from $X$ to $L^p(\mathcal{G}, P')$ continuous.  Therefore,
  intersections of $ L^p(\mathcal{G}, P') $ balls of some radius
  around $0$ with $ X $ constitute a neighborhood base for this
  topology. Hence any linear functional $\ell$ with respect to this
  topology can be extended some $L^p(\mathcal{G}, \hat{P})$ with
  $\hat{P}$ in $\mathcal{P}_p$, just by the fact that the open
  neighborhood $ \ell^{-1}(]-1,1[) $ of $0$ has to contain some
  intersection of $L(\mathcal{G}, P')$ balls 
  with $X$. Therefore, $\ell$ can be represented as
  \[
    \ell: X \to \mathbb{R},\, f\mapsto \int f g d\hat{P}
  \]
  for some $g \in L^p(\mathcal{G}, \hat{P})$.  Combining this with the
  fact that all linear functionals of this form for some
  $\hat{P} \in \mathcal{P}_p$ and some
  $g \in L^q(\mathcal{G}, \hat{P})$ are elements of the dual, yields
  the assertion that the strong dual of $X$ is actually
  \[
    \bigcup_{P' \in \mathcal{P}} L^q(\mathcal{G}, P').
  \]
  The strong topology on the strong dual just means that convergence
  always takes place in some $L^q(\mathcal{G}, P')$.

  In this direction one could also work with the locally convex vector
  space
  \[
    \bigcap_{p \geq 1} \bigcap_{P' \in \mathcal{P}_p} L^p(\mathcal{G},
    P')
  \]
  and its natural projective limit topology. The corresponding ``No
  arbitrage'' condition would be to replace the fixed $p$ in
  Definition \ref{CpNAFLp} by `` there exists some $p$'' and
  corresponding set of martingale measures would then be
  $\cup_{q > 1} \mathcal{M}^q$ .
 
\end{remark}

We have now collected all ingredients to formulate a \emph{fundamental
  theorem of asset pricing} in the present context of two filtrations.

\begin{theorem}\label{ftap-p}
  Suppose that Assumption \ref{lp} holds for some fixed
  $1 \leq p< \infty$.  Then the condition (NAFLp) holds if and only if
  $\mathcal{M}^q\neq\emptyset$ where $q$ satisfies
  $\frac1{p}+\frac1{q}=1$.
\end{theorem}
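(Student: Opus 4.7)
My plan is to prove both implications separately.

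For the easy direction $\mathcal{M}^q \neq \emptyset \Rightarrow$ (NAFLp), suppose $Q \in \mathcal{M}^q$ with $dQ/dP' \in L^q(\mathcal{G}, P')$ for some $P' \in \mathcal{P}_p$. I would first show that $E_Q[X] = 0$ for every $X \in K_0$. Writing $X = (\mathbf{H}^A \cdot \mathbf{S}^A)_1$ for a bounded $\mathbb{F}^A$-simple strategy, the defining property of $\mathcal{M}^q$, namely $E_Q[S^{\alpha_j}_{t_i} | \mathcal{F}^A_{t_{i-1}}] = E_Q[S^{\alpha_j}_{t_{i-1}} | \mathcal{F}^A_{t_{i-1}}]$, combined with $\mathcal{F}^A_{t_{i-1}}$-measurability and boundedness of $H^{\alpha_j}_{t_{i-1}}$, yields $E_Q[H^{\alpha_j}_{t_{i-1}}(S^{\alpha_j}_{t_i} - S^{\alpha_j}_{t_{i-1}})] = 0$ by conditioning; note that $S^{\alpha_j}_\cdot \in L^1(Q)$ by Hölder's inequality, since $dQ/dP' \in L^q(P')$ and $S^{\alpha_j}_\cdot \in L^p(\mathcal{G}, P')$. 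Summing over $i, j$ gives $E_Q[X] = 0$. Next, any $g_n = f_n - h_n \in C_p(P')$ with $f_n \in K_0$ and $h_n \in L^p_+(\mathcal{G}, P')$ satisfies $E_Q[g_n] \leq 0$, and the functional $E_Q[\cdot] = E_{P'}[(dQ/dP')\,\cdot\,]$ is continuous on $L^p(\mathcal{G}, P')$ by Hölder. Hence $E_Q[g] \leq 0$ for every $g \in \overline{C_p(P')}$, and together with $g \geq 0$ and $Q \sim P$ this forces $g = 0$, proving (NAFLp).

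For the hard direction (NAFLp) $\Rightarrow \mathcal{M}^q \neq \emptyset$, the strategy is a Kreps--Yan type separation in $L^p(\mathcal{G}, P')$, where $P' \in \mathcal{P}_p$ is the measure witnessing (NAFLp). The first structural point is that $K_0$ is a linear subspace of $L^p(\mathcal{G}, P')$: if $X^i \in \mathcal{X}^{A^i}$ for $i = 1, 2$, then by monotonicity $\mathbb{F}^{A^i} \subseteq \mathbb{F}^{A^1 \cup A^2}$ both underlying strategies are $\mathbb{F}^{A^1 \cup A^2}$-predictable, so by the refining property $A^1 \cup A^2 \in \bigcup_n \mathcal{A}^n$ and $X^1 + X^2 \in \mathcal{X}^{A^1 \cup A^2} \subseteq K_0$; closure under scalar multiplication is immediate. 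Consequently $C = K_0 - L^0_+$ and $C_p(P') = K_0 - L^p_+(\mathcal{G}, P')$ are convex cones, and $\overline{C_p(P')}$ is a closed convex cone containing $-L^p_+(\mathcal{G}, P')$ (since $0 \in K_0$). Hypothesis (NAFLp) reads $\overline{C_p(P')} \cap L^p_+(\mathcal{G}, P') = \{0\}$, which is exactly the premise of the Kreps--Yan theorem in $L^p$; applying it produces some $g \in L^q(\mathcal{G}, P')$ with $g > 0$ $P'$-a.s.\ and $E_{P'}[gX] \leq 0$ for every $X \in \overline{C_p(P')}$.

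Define $Q$ by $dQ/dP' = g / E_{P'}[g]$. Then $Q \sim P' \sim P$ and $dQ/dP' \in L^q(\mathcal{G}, P')$, so the integrability constraint in $\mathcal{M}^q$ holds. Since $K_0$ is a linear space, both $X$ and $-X$ belong to $C_p(P') \subseteq \overline{C_p(P')}$, whence $E_Q[X] = 0$ for every $X \in K_0$. Testing against the strategy $H^{\alpha_i}_s = 1_B \, 1_{(t,u]}(s)$ with $B \in \mathcal{F}^A_t$ and zero in the other coordinates (for $A \in \bigcup_n \mathcal{A}^n$, $1 \leq i \leq |A|$, $t \leq u$) yields $E_Q[1_B(S^{\alpha_i}_u - S^{\alpha_i}_t)] = 0$, i.e.\ $E_Q[S^{\alpha_i}_u | \mathcal{F}^A_t] = E_Q[S^{\alpha_i}_t | \mathcal{F}^A_t]$ $Q$-a.s.; integrability of $S^{\alpha_i}_\cdot$ under $Q$ follows again by Hölder. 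Hence $Q \in \mathcal{M}^q$.

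The main obstacle I anticipate is the clean invocation of Kreps--Yan in this large-market $L^p$ setting: the theorem itself is standard, but its hypotheses depend crucially on $K_0$ being a linear space (so that $\overline{C_p(P')}$ is a \emph{cone}, not merely a convex set) and on $\overline{C_p(P')}$ containing $-L^p_+$. Both hinge on the refining and monotonicity properties of $\{\mathcal{A}^n\}$ and $\{\mathbb{F}^A\}$; without them one could separate only on each fixed finite sub-market $A$ and would lack a single measure simultaneously satisfying the optional-projection martingale condition for \emph{all} $A$. A minor subtlety is the case $p=1$, $q=\infty$, where Kreps--Yan still delivers $g \in L^\infty(P')$ strictly positive, giving $dQ/dP' \in L^\infty(P')$ as required by the definition of $\mathcal{M}^\infty$.
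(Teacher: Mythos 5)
Your proof is correct and follows essentially the same route as the paper: the converse direction is the same approximation-plus-H\"older argument, and the forward direction is the same Kreps--Yan separation in $L^p(\mathcal{G},P')$ (the paper invokes it as Theorem~\ref{th:asy} in the appendix, due to Yan, Ansel and Stricker) followed by normalization of the separating functional and testing against $\pm\ind_B(S^{\alpha_i}_u-S^{\alpha_i}_t)$. Your explicit verification that $K_0$ is a linear space via the refining and monotonicity properties is a welcome detail that the paper leaves implicit.
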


\begin{proof}
  Assume first that (NAFLp) holds. This means that there exists some
  $P' \sim P$ such that \eqref{NA} holds.  Note that this then implies
  Condition (ii) of Theorem \ref{th:asy} of the Appendix for
  $M=C_p(P')$. This in turn is equivalent to Condition (iii) in
  Theorem \ref{th:asy} and thus yields some
  $Z \in L^q(\mathcal{G}, P')$ such that $Z >0$ a.s.~and
  $\sup_{f \in \overline{C_p(P')}} E_{P'}[Zf]< \infty$. As $C_p(P')$
  is a convex cone this implies that
  $\sup_{f \in \overline{C_p(P')}} E_{P'}[Zf]\leq0$. Define now $Q$
  with $\frac{dQ}{dP'}=\frac{Z}{E_{P'}[Z]}$. We have that
  $E_Q[f]\leq 0$ for all $f\in\overline{C_p(P')}$. In particular,
  $\pm\ind_B(S^{\alpha_i}_u-S^{\alpha_i}_t)\in C_p(P')$ for
  $t\leq u\in[0,1]$ and
  $A=\{\alpha_1,\dots,\alpha_l\}\in \bigcup_{n\geq 1}\mathcal{A}^n$,
  $1\leq i\leq l$ and $B\in\mathcal{F}^A_{t}$. Hence we get
  $E_Q[\ind_B(S^{\alpha_i}_u-S^{\alpha_i}_t)]=0$, $i=1,\dots,l$ and so
  \begin{equation}\label{part}E_Q[\mathbf{S}^A_{u}|
    \mathcal{F}^A_{t}]=
    E_Q[\mathbf{S}^A_{t}| \mathcal{F}^A_{t}]\end{equation}
  almost surely. This shows the first direction of the theorem.

  Concerning the other direction, let $Q \in \mathcal{M}^q $. By the
  definition of $\mathcal{M}^q$ there thus exists some
  $P' \in \mathcal{P}_p$ such that
  $\frac{dQ}{dP'} \in L^q(\mathcal{G}, P')$.  Assume now that
  \eqref{NA} does not hold for this $P'$ and the dual $p$. Then there
  exists $f\neq 0$,
  $f\in \overline{C_p(P')}\cap L^p_+(\mathcal{G},P')$. By definition
  $f=\lim_{n\to\infty}f^n$ where the limit is in
  $L^p(\mathcal{G}, P')$ and $f^n=X^n_1-h^n$ with
  $h^n\in L^p_+(\mathcal{G},P')$ and $X^n_1\in K_0$. Clearly,
  $E_Q[X^n_1]=0$, hence $E_Q[f^n]\leq 0$ for all $n$. The convergence
  of $f^n$ to $f$ in $L^p(\mathcal{G}, P')$ implies that $f^n$
  converges to $f$ in $L^1(\mathcal{G},Q)$ and hence $E_Q[f]\leq
  0$. This is a contradiction to $f\geq0$ and $f\neq0$.
\end{proof}

The above fundamental theorem can also be reformulated in the
following way, showing that in a one filtration setting the current no
arbitrage condition is equivalent to the NFLVR condition of
\cite{DS:94} in the case of small financial markets and to the NAFLVR
condition of \cite{CuchieroKleinTeichmann} for large financial
markets, whenever there exists an equivalent martingale measure for
all $S^i$ (this is the case, for example, if all $S^i$ are bounded).
Note in particular that the result therefore only depends on the
equivalence class of $P$ but not on $P$ itself.

\begin{corollary}\label{ftap-1}
  The condition (NAFLp) holds for some $1\leq p<\infty$ if and only if
  $\mathcal{M}^1\neq\emptyset$.
\end{corollary}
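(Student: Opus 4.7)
The plan is to read off both directions of Corollary~\ref{ftap-1} from Theorem~\ref{ftap-p} via two elementary observations: (i) $\mathcal{M}^{q}\subseteq\mathcal{M}^{1}$ for every $q\in(1,\infty]$, and (ii) given $Q\in\mathcal{M}^{1}$ one can always take $P':=Q$ as a witnessing measure, giving access to the case $p=1$, $q=\infty$.

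For the direction $(\Rightarrow)$, suppose (NAFLp) holds for some $p\in[1,\infty)$. By Definition~\ref{CpNAFLp} the set $\mathcal{P}_p$ is nonempty, so Assumption~\ref{lp} is satisfied, and Theorem~\ref{ftap-p} yields some $Q\in\mathcal{M}^{q}$ with $\frac1p+\frac1q=1$. I then need to argue that $Q\in\mathcal{M}^{1}$. The martingale identity appearing in the definitions of $\mathcal{M}^{q}$ and $\mathcal{M}^{1}$ is literally the same, so the only thing to check is that the conditional expectations are well-defined under $Q$, i.e.~that $S^{\alpha_i}_t\in L^1(\mathcal{G},Q)$. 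This is immediate from H\"older's inequality: by construction there is $P'\in\mathcal{P}_p$ with $S^{\alpha_i}_t\in L^p(\mathcal{G},P')$ and $\frac{dQ}{dP'}\in L^q(\mathcal{G},P')$, so $S^{\alpha_i}_t$ is in $L^1(\mathcal{G},Q)$. Hence $\mathcal{M}^{1}\neq\emptyset$.

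For the direction $(\Leftarrow)$, pick $Q\in\mathcal{M}^{1}$. I want to produce a pair $(p,P')$ satisfying the hypotheses of Theorem~\ref{ftap-p}; I aim for $p=1$ and $P':=Q$. The key observation is that the very formulation of the martingale identity in the definition of $\mathcal{M}^{1}$,
\[
  E_Q[S^{\alpha_i}_u\mid\mathcal{F}^A_t]=E_Q[S^{\alpha_i}_t\mid\mathcal{F}^A_t]\quad\text{a.s.},
\]
presupposes $S^{\alpha_i}_t\in L^1(\mathcal{G},Q)$ for all $i\in I$ and $t\in[0,1]$; thus $P':=Q$ lies in $\mathcal{P}_1$ and Assumption~\ref{lp} is fulfilled for $p=1$. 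With this choice, $\frac{dQ}{dP'}\equiv 1\in L^{\infty}(\mathcal{G},P')$ trivially, so $Q\in\mathcal{M}^{\infty}$ for the conjugate pair $(p,q)=(1,\infty)$. Applying Theorem~\ref{ftap-p} in this case produces (NAFL1), which is (NAFLp) for $p=1$.

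The main (and only) subtlety is the implicit integrability baked into the definition of $\mathcal{M}^{1}$; once this is unpacked, both directions collapse to one-line applications of Theorem~\ref{ftap-p}, with the slick choice $P'=Q$ bypassing any need to construct a witnessing measure by hand.
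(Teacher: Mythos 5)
Your proof is correct and follows essentially the same route as the paper: the forward direction is Theorem~\ref{ftap-p} plus the inclusion $\mathcal{M}^q\subseteq\mathcal{M}^1$ (which you justify via H\"older where the paper says ``clearly''), and the reverse direction is exactly the paper's Remark~\ref{qexists}, taking $P'=Q$ so that $Q\in\mathcal{M}^\infty$ for the pair $(p,q)=(1,\infty)$ and applying Theorem~\ref{ftap-p} again. Your explicit remark about the integrability implicit in the definition of $\mathcal{M}^1$ is a fair and useful clarification, but the argument is the same.
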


\begin{remark}\label{qexists}
  If $\mathcal{M}^1\neq\emptyset$ then there always exists some $q>1$
  such that $\mathcal{M}^q\neq\emptyset$. Indeed, take any
  $Q\in\mathcal{M}^1$ and let $P'=Q$.  Then $Q\in\mathcal{M}^{\infty}$
  as $\frac{dQ}{dP'}=1$ and all $S^i \in L^1(\mathcal{G},
  Q)$. 
\end{remark}

\begin{proof}
  If (NAFLp) holds, then by Theorem~\ref{ftap-p}
  $\mathcal{M}^q\neq\emptyset$. And, clearly
  $\mathcal{M}^q\subseteq\mathcal{M}^1$. The reverse direction follows
  by Remark~\ref{qexists} as (NAFL1) holds for $Q\in\mathcal{M}^1$.
\end{proof}

\subsection{Equivalent formulations for (NAFLp)}\label{sec:equivform}

In the spirit of Remark \ref{stronger_topologies} one can introduce a
slightly weaker versions of (NAFLp) by considering the projective
locally convex topology on
$ \cap_{P' \in \mathcal{P}_p} L^p(\mathcal{G}, P') $. For a set
$E \in \cap_{P' \in \mathcal{P}_p} L^p(\mathcal{G}, P') $ we denote
the closure with respect to this topology by $\overline{E}^{\cap}$.

\begin{corollary}\label{cor:equiv}
  The following conditions are equivalent.
  \begin{equation}\label{NA_cap}
    \begin{split}
      \text{(NAFLp)} &\Leftrightarrow   \bigcap_{P' \in \mathcal{P}_p} \overline{C_p(P')} \cap \bigcap_{P' \in \mathcal{P}_p} L_+^p(\mathcal{G}, P') =\{0\} \\
      &\Leftrightarrow \overline{\bigcap_{P' \in \mathcal{P}_p}
        C_p(P')}^{\cap}\cap \bigcap_{P' \in \mathcal{P}_p}
      L_+^p(\mathcal{G}, P')=\{0\} \, .
    \end{split}
  \end{equation}
\end{corollary}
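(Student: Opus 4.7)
My plan is to prove the cycle of implications (NAFLp) $\Rightarrow$ (the second condition of \eqref{NA_cap}) $\Rightarrow$ (the third) $\Rightarrow$ (NAFLp), closing the loop via Theorem \ref{ftap-p}.

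The first implication is immediate: if $P'_* \in \mathcal{P}_p$ is a witness for (NAFLp), then both $\bigcap_{P' \in \mathcal{P}_p} \overline{C_p(P')}$ and $\bigcap_{P' \in \mathcal{P}_p} L_+^p(\mathcal{G},P')$ are contained in their respective counterparts for $P'_*$, whence their intersection lies in $\overline{C_p(P'_*)} \cap L_+^p(\mathcal{G},P'_*) = \{0\}$. For the second implication, note $\bigcap_{P' \in \mathcal{P}_p} C_p(P') = \mathcal{C}_p$ (using $K_0 \subseteq \bigcap_{P'} L^p(\mathcal{G},P')$) and, by Remark \ref{cp=k0minuslp}(ii), $\overline{C_p(P')}^{L^p(\mathcal{G},P')} = \overline{\mathcal{C}_p}^{L^p(\mathcal{G},P')}$ for each $P'$. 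Since the projective $\cap$-topology on $X := \bigcap_{P' \in \mathcal{P}_p} L^p(\mathcal{G},P')$ is finer than the restriction of any individual $L^p(\mathcal{G},P')$-topology, closures in the $\cap$-topology are smaller, giving $\overline{\mathcal{C}_p}^{\cap} \subseteq \bigcap_{P'} \overline{C_p(P')}^{L^p(\mathcal{G},P')}$; intersecting with $\bigcap_{P'} L_+^p(\mathcal{G},P')$ yields the third condition.

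The main step is third $\Rightarrow$ (NAFLp), which I plan to obtain by constructing some $Q^* \in \mathcal{M}^q$ and invoking Theorem \ref{ftap-p}. Set $M := \overline{\mathcal{C}_p}^{\cap}$: this is a closed convex cone in $X$ containing $-\bigcap_{P'} L_+^p(\mathcal{G},P')$ with $M \cap \bigcap_{P'} L_+^p(\mathcal{G},P') = \{0\}$ by hypothesis. By Remark \ref{stronger_topologies}, $X^* = \bigcup_{P' \in \mathcal{P}_p} L^q(\mathcal{G},P')$. For each $0 \neq h \in \bigcap_{P'} L_+^p(\mathcal{G},P')$, Hahn-Banach in $X$ produces $\ell_h \in X^*$ with $\sup_M \ell_h \leq 0 < \ell_h(h)$, and $\ell_h$ is represented as $\ell_h(f) = E_{P'_h}[g_h f]$ for some $P'_h \in \mathcal{P}_p$ and $g_h \in L^q(\mathcal{G},P'_h)$. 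Since $-\ind_A \in -L^\infty_+ \subseteq -\bigcap_{P'} L_+^p(\mathcal{G},P') \subseteq \mathcal{C}_p \subseteq M$ for every $A \in \mathcal{G}$, we obtain $g_h \geq 0$ $P'_h$-a.s., and the normalised measure $Q_h := g_h P'_h / E_{P'_h}[g_h] \ll P$ has $L^q(\mathcal{G},P'_h)$-density and satisfies the martingale condition of $\mathcal{M}^q$ via $\pm \ind_B(S^\alpha_u - S^\alpha_t) \in K_0 \subseteq \mathcal{C}_p$. A Halmos-Savage exhaustion applied to $\{Q_h : h = \ind_A,\, P(A)>0\}$, together with convexity of $\mathcal{M}^q$ (Remark \ref{convexity}), then glues countably many such measures into a single $Q^* \sim P$ still in $\mathcal{M}^q$. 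Hence $\mathcal{M}^q \neq \emptyset$ and Theorem \ref{ftap-p} delivers (NAFLp).

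The principal obstacle is this Halmos-Savage gluing: because the $Q_h$'s arise as densities with respect to potentially different reference measures $P'_h$, the mixing coefficients $a_n$ and the reference mixture $P^* = \sum_n b_n P'_{h_n}$ must be chosen jointly so that $dQ^*/dP^* \in L^q(\mathcal{G}, P^*)$ and $P^* \in \mathcal{P}_p$, the latter demanding $S^i_t \in L^p(\mathcal{G}, P^*)$ for every $i \in I$ and $t \in [0,1]$. This is the delicate technical point that distinguishes the projective argument from the classical single-measure Kreps-Yan theorem.
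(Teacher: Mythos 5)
Your proposal follows essentially the same route as the paper: the two easy implications via the chain of inclusions $\overline{C_p(P')} \supseteq \bigcap_{P' \in \mathcal{P}_p}\overline{C_p(P')} \supseteq \overline{\bigcap_{P' \in \mathcal{P}_p} C_p(P')}^{\cap}$, and the closing implication via Hahn--Banach in the projective limit $\bigcap_{P'\in\mathcal{P}_p} L^p(\mathcal{G},P')$ with dual $\bigcup_{P'\in\mathcal{P}_p} L^q(\mathcal{G},P')$, an exhaustion argument producing some $Q\in\mathcal{M}^q$, and then Theorem~\ref{ftap-p}. The paper is in fact terser than you are on the exhaustion step (it merely points to the Yan--Stricker argument), so the gluing subtlety you flag --- choosing the reference mixture so that it stays in $\mathcal{P}_p$ and carries the density in $L^q$ --- is left implicit there as well.
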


\begin{remark}
  Note that
  $\overline{\bigcap_{P' \in \mathcal{P}_p}
    C_p(P')}^{\cap}=\overline{\mathcal{C}_p}^{\cap}$ where
  $\mathcal{C}_p$ was introduced in Remark \ref{cp=k0minuslp} (ii).
\end{remark}

\begin{proof}
  Indeed, the implications
  \begin{align*}
    \text{(NAFLp)} &\Rightarrow \bigcap_{P' \in \mathcal{P}_p} \overline{C_p(P')} \cap \bigcap_{P' \in \mathcal{P}_p} L_+^p(\mathcal{G}, P') =\{0\}\\& \Rightarrow \overline{\bigcap_{P' \in \mathcal{P}_p} C_p(P')}^{\cap}\cap \bigcap_{P' \in \mathcal{P}_p} L_+^p(\mathcal{G}, P')=\{0\}
  \end{align*}
  hold since
  \[
    \overline{C_p(P')} \supseteq \bigcap_{P' \in \mathcal{P}_p}
    \overline{C_p(P')} \supseteq \overline{\bigcap_{P' \in
        \mathcal{P}_p} C_p(P')}^{\cap}.
  \]
  and since we can replace $L^p_+(\mathcal{G},P)$ by
  $\cap_{P' \in \mathcal{P}_p} L_+^p(\mathcal{G}, P')$ in the
  Definition of (NAFLp) (see Remark \ref{cp=k0minuslp} (ii)).  In
  order to prove that the last condition implies (NAFLp) we apply the
  Hahn-Banach theorem, in this locally convex case to construct an
  element $ Q \in \mathcal{M}^q $, i.e.~a normalized, separating
  continuous linear functional, which -- by an exhaustion argument --
  maps characteristic functions $ 1_A $ for measurable sets $ A $ with
  $ P(A) > 0 $ to positive numbers (compare with the proof of Theorem
  \ref{th:asy} in \cite{Y:80,S:90}). Note here that the relevant fact
  used here is that $\cup_{P' \in \mathcal{P}_p} L^q(\mathcal{G}, P')$
  is the dual of $\cap_{P' \in \mathcal{P}_p} L^p(\mathcal{G}, P')$ as
  shown in Remark \ref{stronger_topologies}. The existence of a
  measure $Q \in \mathcal{M}^q$ means by Theorem~\ref{ftap-p} that
  (NAFLp) holds true.
\end{proof}

The bipolar theorem now allows to show equality of the following sets
\[
  \overline{\bigcap_{P' \in \mathcal{P}_p} C_p(P')}^{\cap}=\bigcap_{P'
    \in \mathcal{P}_p} \overline{C_p(P')}
\]
under (NAFLp), yielding a nice characterization of the closure of
$\cap_{P' \in \mathcal{P}_p} C_p(P')$ in the projective locally convex
topology.  To this end, let us introduce the polar cone of a convex
cone $E \in \cap_{P' \in \mathcal{P}_p} L^p(\mathcal{G},P')$ denoted
by $E^{\circ}$:
$$E^{\circ}=\left\{g\in \bigcup_{P'\in \mathcal{P}_p} L^{q}(\mathcal{G},P'): E[fg]\leq 0,\text{ for all $f\in E$}\right\}.$$

\begin{theorem}\label{th:sameset}
  Under (NAFLp) (or one of the equivalent conditions in
  \eqref{NA_cap}) it holds that
  \[
    \overline{\bigcap_{P' \in \mathcal{P}_p}
      C_p(P')}^{\cap}=\bigcap_{P' \in \mathcal{P}_p}
    \overline{C_p(P')}.
  \]
\end{theorem}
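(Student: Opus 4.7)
The plan is to establish the two inclusions separately. The inclusion $\overline{\bigcap_{P' \in \mathcal{P}_p} C_p(P')}^{\cap}\subseteq \bigcap_{P' \in \mathcal{P}_p}\overline{C_p(P')}$ is the easy one and does not require (NAFLp). It follows from the very definition of the projective locally convex topology on $X := \bigcap_{P' \in \mathcal{P}_p} L^p(\mathcal{G}, P')$: convergence in this topology is, by construction, stronger than convergence in each individual $L^p(\mathcal{G}, P')$. Since $\mathcal{C}_p \subseteq C_p(P')$ for every $P' \in \mathcal{P}_p$, any element $f$ of the projective closure of $\mathcal{C}_p$ is a limit, along a net in $\mathcal{C}_p \subseteq C_p(P')$, in every $L^p(\mathcal{G}, P')$-norm; hence $f$ lies in $\overline{C_p(P')}$ for each $P'$.

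For the reverse inclusion, my strategy is to apply the bipolar theorem in the dual pair $\bigl(X, X^\ast\bigr)$ which, by Remark \ref{stronger_topologies}, equals $\bigl(\bigcap_{P' \in \mathcal{P}_p} L^p(\mathcal{G},P'), \bigcup_{P' \in \mathcal{P}_p} L^q(\mathcal{G},P')\bigr)$. Since $\mathcal{C}_p$ is a convex cone containing $0$, the bipolar theorem yields $\mathcal{C}_p^{\circ\circ} = \overline{\mathcal{C}_p}^{\cap}$. Therefore it suffices to show: every $f \in \bigcap_{P' \in \mathcal{P}_p}\overline{C_p(P')}$ is an element of $X$ and satisfies $E[fg]\leq 0$ for all $g \in \mathcal{C}_p^{\circ}$. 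The membership $f \in X$ is immediate, since $\overline{C_p(P')} \subseteq L^p(\mathcal{G}, P')$ for each $P'$.

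The substantive step is to show that any polar element $g$ of $\mathcal{C}_p$ is automatically a polar element of $C_p(\hat P)$ for whichever $\hat P \in \mathcal{P}_p$ realizes $g \in L^q(\mathcal{G}, \hat P)$. Since $K_0 \subseteq \mathcal{C}_p$ and $K_0$ is symmetric (closed under negation), $g$ annihilates $K_0$: $E_{\hat P}[X_1 g]=0$ for all $X_1 \in K_0$. Moreover, since $L^\infty_+(\mathcal{G}) \subseteq \bigcap_{P' \in \mathcal{P}_p} L_+^p(\mathcal{G}, P')$, testing against bounded indicators $h = \mathbf{1}_A$ yields $g \geq 0$ $\hat P$-a.s. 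Combining these two properties, we get $E_{\hat P}[f_0 g] \leq 0$ for every $f_0 = X_1 - h \in C_p(\hat P)$, and by continuity of the linear functional $f\mapsto E_{\hat P}[fg]$ on $L^p(\mathcal{G},\hat P)$ (Hölder), this extends to all $f \in \overline{C_p(\hat P)}$. Since $f \in \bigcap_{P'}\overline{C_p(P')} \subseteq \overline{C_p(\hat P)}$, we conclude $E_{\hat P}[fg]\leq 0$, which is what we needed.

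The main obstacle I expect is identifying the correct duality framework and verifying that the polar cone structure carries the required information despite the difference between $\mathcal{C}_p$ and the larger $C_p(\hat P)$. The crucial observation that resolves this is the density of the bounded positives $L^\infty_+(\mathcal{G})$ in each $L^p_+(\mathcal{G},P')$, already exploited in Remark~\ref{cp=k0minuslp}(ii), which allows the positivity of $g$ to be pinned down from the smaller cone $\mathcal{C}_p$ alone, after which the extension from $\mathcal{C}_p^\circ$ to $C_p(\hat P)^\circ$ is essentially automatic.
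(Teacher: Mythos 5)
Your proof is correct, and it follows the same overall strategy as the paper --- the bipolar theorem for the dual pair $\big(\bigcap_{P'\in\mathcal{P}_p}L^p(\mathcal{G},P'),\,\bigcup_{P'\in\mathcal{P}_p}L^q(\mathcal{G},P')\big)$ applied to the convex cones in question --- but the key step is carried out differently, and in a way that is worth noting. The paper proves the equality by computing \emph{both} polar cones explicitly, showing $V^{\circ}=W^{\circ}=\bigcup_{\lambda\geq 0}\lambda\overline{\mathcal{M}^{q}}^{\cup}$ (with $V$ the projective closure of $\bigcap_{P'}C_p(P')$ and $W=\bigcap_{P'}\overline{C_p(P')}$); this identification passes through the separating measures and genuinely uses (NAFLp), since placing the normalized, merely absolutely continuous density $Z/E_{P'}[Z]$ into $\overline{\mathcal{M}^q}$ requires $\mathcal{M}^q\neq\emptyset$. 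You instead show directly that $\mathcal{C}_p^{\circ}\subseteq \big(\overline{C_p(\hat P)}\big)^{\circ}$ for the representing measure $\hat P$ of each polar element $g$, by extracting from the smaller cone $\mathcal{C}_p$ exactly the two properties needed ($g$ annihilates the symmetric set $K_0$, and $g\geq 0$ via the bounded negative indicators), and then concluding $W\subseteq\mathcal{C}_p^{\circ\circ}=\overline{\mathcal{C}_p}^{\cap}$. The individual computations are essentially the same as the paper's (testing against $\pm K_0$ and against $-\ind_A$), but your route never mentions martingale measures and never invokes (NAFLp), so it actually establishes the identity unconditionally. What the paper's longer route buys is the dual characterization of $V^{\circ}=W^{\circ}$ as the cone generated by $\overline{\mathcal{M}^q}^{\cup}$, which is reused later for the super-replication results; what yours buys is brevity and the removal of the hypothesis. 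The only caveat, which you inherit from the paper rather than introduce, is that an element $g\in\bigcup_{P'}L^q(\mathcal{G},P')$ determines a functional only together with a choice of representing measure $\hat P$; you handle this correctly by fixing such a $\hat P$ throughout.
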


\begin{proof}
  Let us show first that
  \[
    \big(\underbrace{\overline{\bigcap_{P' \in \mathcal{P}_p}
        C_p(P')}^{\cap}}_{:=V}\big)^{\circ}=\big(\underbrace{\bigcap_{P'
        \in \mathcal{P}_p} \overline{C_p(P')}}_{:=W}\big)^{\circ} =
    \bigcup_{\lambda\geq 0}\la\overline{\mathcal{M}^{q}}^{\cup},
  \]
  where $\overline{\mathcal{M}^{q}}^{\cup}$ denotes closure with
  respect to the injective locally convex topology on
  $ \cup_{P' \in \mathcal{P}_p} L^q(\mathcal{G}, P') $.  First assume
  that $Z=\frac{dQ}{dP'}$ for some
  $Q\in\overline{\mathcal{M}^{q}}^{\cup}$ and $P' \in
  \mathcal{P}_p$. Let $f\in V, W$.  Then $f \in \overline{C_p(P')}$
  and thus $E_Q[f]\leq0$.  This shows
  $\bigcup_{\lambda\geq
    0}\la\overline{\mathcal{M}^{q}}^{\cup}\subseteq {V}^{\circ},
  {W}^{\circ}$.

  Assume now $Z\in {V}^{\circ}, {W}^{\circ}$. As
  $-\cap_{P' \in \mathcal{P}_p} L^p_+(\mathcal{G}, P')\subseteq V,W$
  this immediately implies that $Z\geq 0$ a.s. Assume the non-trivial
  case that $P(Z>0)>0$ and define a probability measure $Q\ll P'$ for
  some $P' \in \mathcal{P}_p$ via
  $\frac{dQ}{dP'}=\frac{Z}{E_{P'}[Z]}$. Hence we get $E_Q[f]\leq0$ for
  all $f\in V,W$. 
  As all $S^i_t$ are in
  $\cap_{P' \in \mathcal{P}_p} L^p(\mathcal{G}, P')$ we have that, for
  $t\leq u$, $\pm\ind_{B}(S^i_u-S^i_{t})\in V,W$ for $B$ in an
  appropriate $\mathcal{F}^i_t$ and all $i\in I$. This shows that
  $Q\in\overline{\mathcal{M}^{q}}$ and proves the above claim. By the
  bipolar theorem applied in this locally convex case, we then have
  \[
    V^{\circ \circ}=V, \quad W^{\circ \circ}=\overline{W}^{\cap}.
  \]
  As the polars $V^{\circ}=W^{\circ}$ it follows that
  $V=\overline{W}^{\cap}$. But since
  $W \supseteq V= \overline{W}^{\cap}$ it follows that
  $W=\overline{W}^{\cap}=V$.
\end{proof}

\begin{remark}
  Working with the topologically more involved setting of
  intersections and unions of $L^p$- and $L^q$-spaces, we see that
  (NAFLp) could have been defined via
  \[
    \bigcap_{P' \in \mathcal{P}_p} \overline{C_p(P')} \cap \bigcap_{P'
      \in \mathcal{P}_p} L_+^p(\mathcal{G}, P') =\{0\}.
  \]
  In case of bounded price processes this is already on the level of
  the ``no-arbitrage'' condition similar to (NFLVR) (in a one
  filtration setup with finitely many assets), however without an
  (explicit) admissibility assumption which is however implicit due to
  our simple bounded trading strategies.
\end{remark}

\subsection{No asymptotic $L^p$ free lunch for long only portfolios
  and FTAP}\label{main_weakened}

We will now make the setting of Remark~\ref{weakerass} precise and
proceed in an analogous way. Indeed, in this subsection we will assume
the following.

\begin{assumption}\label{lplong}
  We assume that there is an equivalent probability measure
  $P' \sim P$ such that
  \begin{equation*}
    (S^i_u-S^i_t)^-\in L^p(\mathcal{G},P'),\text{ for all $i\in I$,
      $t\leq u\in[0,1]$ }
  \end{equation*}
  for some fixed $1 \leq p < \infty$.  We denote the set of all
  measures $P' \sim P$ satisfying this property by
  $\mathcal{P}_p^{\text{long}} $.
\end{assumption}

Let us define the set of all equivalent $L^q$-measures such that the
optional projections of each $S^i$ are supermartingales:

\begin{align*}
  \mathcal{S}^q =\{&Q\sim P \, | \, \exists P' \in \mathcal{P}_p^{\text{long}}  \text{ s.t. } \frac{dQ}{dP'}\in L^q(\mathcal{G},P') \text{ and } E_Q[S^{\alpha_i}_u|\mathcal{F}^A_t]\leq
                     E_Q[S_t^{\alpha_i}|\mathcal{F}^A_t]\text{ a.s.,}\\ &\text{for all }A=\{\alpha_1,\dots,\alpha_l\}\in\bigcup_{n\geq 1}\mathcal{A}^n, 1\leq i\leq l\text{ and all }t\leq u\in[0,1]\}
\end{align*}

In the definition of the wealth processes $\mathcal{X}^A$ as in
Definition~\ref{XA}, for all sets $A$, we assume now that,
additionally, $H^{\alpha_j}_{t_{i-1}}\geq0$, for all $i,j$. This means
that we are only allowed to have long positions in all assets. The
corresponding definitions of $K_0$, $C_p(P')$, $\overline{C_p(P')}$
and (NAFLp) are then analogous as in Definition~\ref{largeX1} and
Definition~\ref{CpNAFLp}.

Note that by the Assumption \ref{lplong} it is clear that
$C_p(P')\neq\emptyset$. Indeed, for $f\in K_0$ defined with a {\it
  bounded} nonnegative $\mathbb{F}$-simple integrand we have that, for
example,
$$f\wedge 1=f-(f-1)\ind_{\{f>1\}}\in C_p(P'),$$
as, by the boundedness of the integrand and Assumption \ref{lplong} we
have that $(f\wedge1)^{-}\in L^p(\mathcal{G}, P')$ and by definition
$f\wedge 1\leq1$.

\begin{theorem}\label{variant}
  Suppose that Assumption \ref{lplong} holds for some fixed
  $1 \leq p< \infty$. Then (NAFLp) holds with non-negative strategies
  if and only if $\mathcal{S}^q \neq \emptyset$ where $q$ satisfies
  $\frac {1}{p} + \frac{1}{q}=1$.
\end{theorem}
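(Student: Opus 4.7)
The proof follows the blueprint of Theorem~\ref{ftap-p}, with two modifications needed because only long-only strategies are available and because only the negative parts of increments are assumed to be $L^p$-integrable under $P'$.

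For the direction (NAFLp)~$\Rightarrow$~$\mathcal{S}^q\neq\emptyset$, I would fix $P'\in\mathcal{P}_p^{\text{long}}$ realizing (NAFLp) and apply Theorem~\ref{th:asy} to the convex cone $C_p(P')$, producing $Z\in L^q(\mathcal{G},P')$ with $Z>0$ $P'$-a.s.\ and $\sup_{g\in\overline{C_p(P')}} E_{P'}[Zg]\leq 0$. Setting $dQ/dP'=Z/E_{P'}[Z]$ yields $Q\sim P'$ with $dQ/dP'\in L^q(\mathcal{G},P')$. For $A=\{\alpha_1,\ldots,\alpha_l\}\in\bigcup_{n}\mathcal{A}^n$, $\alpha_i\in A$, $t\leq u$ and $B\in\mathcal{F}^A_t$, the long-only simple strategy $H=\ind_B\,\ind_{(t,u]}$ produces the wealth $\ind_B(S^{\alpha_i}_u-S^{\alpha_i}_t)\in K_0$, which need not lie in $L^p(P')$; however, the truncation
\[
f_N := \ind_B(S^{\alpha_i}_u-S^{\alpha_i}_t)\wedge N = \ind_B(S^{\alpha_i}_u-S^{\alpha_i}_t) - \bigl(\ind_B(S^{\alpha_i}_u-S^{\alpha_i}_t) - N\bigr)^+
\]
lies in $C_p(P')$ for every $N\geq 0$, since it is bounded above by $N$ and its negative part is dominated by $(S^{\alpha_i}_u-S^{\alpha_i}_t)^-\in L^p(\mathcal{G},P')$ (Assumption~\ref{lplong}). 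Hence $E_Q[f_N]\leq 0$, and monotone convergence (with integrable lower bound $-(S^{\alpha_i}_u-S^{\alpha_i}_t)^-\in L^1(Q)$, because $dQ/dP'\in L^q$) gives $E_Q[\ind_B(S^{\alpha_i}_u-S^{\alpha_i}_t)]\leq 0$, i.e.\ the supermartingale inequality defining $\mathcal{S}^q$. Only a one-sided inequality arises since $-\ind_B(S^{\alpha_i}_u-S^{\alpha_i}_t)$ is not the terminal value of any long-only wealth process.

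For the converse, take $Q\in\mathcal{S}^q$ with $dQ/dP'\in L^q(\mathcal{G},P')$ for some $P'\in\mathcal{P}_p^{\text{long}}$ and assume, for contradiction, the existence of $f\in\overline{C_p(P')}\cap L^p_+(\mathcal{G},P')$ with $f\neq 0$. Write $f$ as the $L^p(P')$-limit of $f^n=X^n_1-h^n$ with $X^n_1\in K_0$ a long-only wealth and $h^n\in L^p_+(\mathcal{G},P')$. For each summand $H^{\alpha_j}_{t_{i-1}}(S^{\alpha_j}_{t_i}-S^{\alpha_j}_{t_{i-1}})$ of $X^n_1$, the tower property combined with the supermartingale condition and the non-negativity (and boundedness) of $H^{\alpha_j}_{t_{i-1}}$ gives a non-positive $Q$-expectation, whence $E_Q[X^n_1]\leq 0$ and $E_Q[f^n]\leq 0$. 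H\"older's inequality, using $dQ/dP'\in L^q(P')$, transfers $L^p(P')$-convergence to $L^1(Q)$-convergence, so $E_Q[f]\leq 0$, contradicting $f\geq 0$, $f\neq 0$.

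The main obstacle is the integrability bookkeeping: since only the negative parts $(S^{\alpha_i}_u-S^{\alpha_i}_t)^-$ are in $L^p(P')$, the random variables $\ind_B(S^{\alpha_i}_u-S^{\alpha_i}_t)$ themselves need not lie in $C_p(P')$. The truncation step, together with a careful application of monotone convergence dominated by the $L^p$-negative part, is what turns the abstract Kreps--Yan separation from Theorem~\ref{th:asy} into the supermartingale inequality defining $\mathcal{S}^q$; everything else is a faithful transcription of the proof of Theorem~\ref{ftap-p}, with the systematic replacement of ``$\pm\ind_B$'' by ``$\ind_B$'' only.
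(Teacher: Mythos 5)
Your proof is correct and follows essentially the same route as the paper: the same truncation $\ind_B(S^{\alpha_i}_u-S^{\alpha_i}_t)\wedge N\in C_p(P')$ to obtain the supermartingale inequality (the paper passes to the limit via Fatou's lemma, which with your integrable lower bound $-(S^{\alpha_i}_u-S^{\alpha_i}_t)^-\in L^1(Q)$ is the identical step), and the same tower-property argument with non-negative bounded integrands for the converse. No gaps.
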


\begin{proof}
  Assume that (NAFLp) holds. We proceed exactly as in the proof of
  Theorem~\ref{ftap-p}. But in the last step of this direction we only
  get that
  $\ind_B(S^{\alpha_i}_u-S^{\alpha_i}_t)^-\in L^p(\mathcal{G},P')$
  which immediately implies that
  \begin{align*}
    &\left(\ind_B(S^{\alpha_i}_u-S^{\alpha_i}_t)\right)\wedge n\\
    &\quad=\ind_B(S^{\alpha_i}_u-S^{\alpha_i}_t)-\left(\ind_B(S^{\alpha_i}_u-S^{\alpha_i}_t)-n\right)\ind_{\{\ind_B(S^{\alpha_i}_u-S^{\alpha_i}_t)>n\}}
      \in C_p(P'),
  \end{align*}
  for all $n\geq1$. Hence
  $E_Q[(\ind_B(S^{\alpha_i}_u-S^{\alpha_i}_t))\wedge n]\leq0$, for all
  $n\geq 1$. By Fatou's Lemma we get that
  $E_Q[\ind_B(S^{\alpha_i}_u-S^{\alpha_i}_t)]\leq0$ which shows the
  first direction of the theorem.

  Concerning the other direction, let $Q \in \mathcal{S}^q $. By the
  definition of $\mathcal{S}^q$ there thus exists some
  $P' \in \mathcal{P}^{\text{long}}_p$ such that
  $\frac{dQ}{dP'} \in L^q(\mathcal{G}, P')$.  Assume now that
  \eqref{NA} (for non-negative strategies) does not hold for this $P'$
  and the dual $p$.  Proceeding exactly as in the proof of
  Theorem~\ref{ftap-p}, due to the non-negativity of the strategies, the boundedness of the integrands and as $Q\in \mathcal{S}^q$
  we
   get $E_Q[X^k_1]\leq0$. The rest
  follows analogously as in the proof of Theorem~\ref{ftap-p}.
\end{proof}

\section{A Super-replication result}\label{sec:super-replication}
This section is dedicated to present a super-replication results in the present 
$L^p$-setting.  Throughout this section we assume that, for some fixed
$1 \leq p < \infty$, \eqref{NA} holds for the original measure $P$,
and we say (NAFLp) holds for $P$ and write $C_p$ for $C_p(P)$. This
means in particular that $P \in \mathcal{P}_p$. Let us also introduce
the following sets of measures
\begin{equation*}
  \begin{split}
    \mathbb{M}^q=\mathcal{M}^q(P) =\{&Q\sim P \, |\, \frac{dQ}{dP}\in
    L^q(\mathcal{G},P) \text{ and
    }E_Q[S^{\alpha_i}_u|\mathcal{F}^A_t]=
    E_Q[S^{\alpha_i}_t|\mathcal{F}^A_t]\text{ a.s.,}\\ &\text{for all
    }A=\{\alpha_1, \ldots, \alpha_l\} \in\bigcup_{n\geq
      1}\mathcal{A}^n, 1\leq i\leq l\text{ and all }t\leq u\in[0,1]\},
  \end{split}
\end{equation*}
which play a key role in our super-replication result.  Note that
$\mathcal{M}^q= \bigcup_{P' \in \mathcal{P}_p} \mathcal{M}^q(P')$ and
that the proof of Theorem \ref{ftap-p} implies the following
assertion.

\begin{corollary}
  The condition (NAFLp) holds for $P$ if and only if
  $\mathbb{M}^q \neq \emptyset$.
\end{corollary}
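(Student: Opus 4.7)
The plan is to simply re-run the proof of Theorem~\ref{ftap-p} with the reference measure $P'$ taken to be the specific measure $P$ throughout. This is legitimate here because the standing assumption $P \in \mathcal{P}_p$ ensures that $S^i_t \in L^p(\mathcal{G},P)$ for all $i$ and $t$, and so $C_p = C_p(P)$ is a well-defined convex subset of $L^p(\mathcal{G},P)$. What needs to be verified is that the density produced in the existence direction lies in $L^q(\mathcal{G},P)$ itself (not merely in some $L^q(\mathcal{G},P')$), and that the necessity direction works with $dQ/dP \in L^q(\mathcal{G},P)$.

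For the forward direction, assume (NAFLp) for $P$, i.e.\ $\overline{C_p}\cap L^p_+(\mathcal{G},P)=\{0\}$. I would apply Theorem~\ref{th:asy} in the Appendix to the convex cone $M := C_p \subseteq L^p(\mathcal{G},P)$, yielding a strictly positive $Z \in L^q(\mathcal{G},P)$ with $\sup_{f \in \overline{C_p}} E_P[Zf]<\infty$; by conicity this supremum is $\leq 0$. Define $Q \sim P$ by $dQ/dP := Z/E_P[Z]$, which automatically sits in $L^q(\mathcal{G},P)$. Testing $E_Q[\,\cdot\,]\leq 0$ against $\pm \mathbf{1}_B(S^{\alpha_i}_u - S^{\alpha_i}_t) \in C_p$ for arbitrary $t\leq u$, $A = \{\alpha_1,\ldots,\alpha_l\} \in \bigcup_{n\geq 1}\mathcal{A}^n$, $1\leq i\leq l$ and $B \in \mathcal{F}^A_t$ yields the conditional martingale identity required in the definition of $\mathbb{M}^q$, so $Q \in \mathbb{M}^q$.

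For the reverse direction, take $Q \in \mathbb{M}^q$, so $dQ/dP \in L^q(\mathcal{G},P)$ by definition. Suppose for contradiction that (NAFLp) for $P$ fails, so there exists $f \in \overline{C_p}\cap L^p_+(\mathcal{G},P)$ with $f \neq 0$. Write $f = \lim_n f^n$ in $L^p(\mathcal{G},P)$ with $f^n = X^n_1 - h^n$, $X^n_1 \in K_0$ and $h^n \in L^p_+(\mathcal{G},P)$. Since each $X^n_1$ is the terminal value of an $\mathbb{F}^A$-simple wealth process on some finite $A$, the martingale property defining $\mathbb{M}^q$ gives $E_Q[X^n_1]=0$, hence $E_Q[f^n] \leq 0$. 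Because $dQ/dP \in L^q(\mathcal{G},P)$, Hölder's inequality makes the inclusion $L^p(\mathcal{G},P) \hookrightarrow L^1(\mathcal{G},Q)$ continuous, so $E_Q[f^n] \to E_Q[f]$, giving $E_Q[f] \leq 0$ and contradicting $f \geq 0$, $f \neq 0$, $Q \sim P$.

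I do not expect any genuine obstacle: the only point one has to be attentive to is that in both directions the density is controlled in $L^q(\mathcal{G},P)$ rather than in some $L^q(\mathcal{G},P')$ with $P' \neq P$. In the forward direction this is built into the choice of test space in Theorem~\ref{th:asy}, and in the reverse direction this is exactly what the definition of $\mathbb{M}^q$ guarantees. Hence the corollary follows as a direct specialization of Theorem~\ref{ftap-p}.
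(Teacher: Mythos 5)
Your proposal is correct and is exactly what the paper intends: the paper states the corollary without a separate proof, remarking only that the proof of Theorem~\ref{ftap-p} (run with the fixed reference measure $P'=P$, legitimate since the standing assumption of the section puts $P\in\mathcal{P}_p$) yields the assertion, which is precisely your specialization. The one point worth noting is that Theorem~\ref{th:asy}(iii) gives the finiteness of $\sup_{Y\in E}E[ZY]$ over $E=C_p$ itself rather than over $\overline{C_p}$, but the extension to the closure is immediate by H\"older and conicity, and the paper's own proof of Theorem~\ref{ftap-p} glosses over this in the same way.
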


We henceforth identify measures $Q\in\mathbb{M}^q$ with their density
$\frac{dQ}{dP}$ so that we can consider $\mathbb{M}^q$ as a subset of
$L^{q}(\mathcal{G},P)$. Recall that $\overline{\mathbb{M}^q}$ is the
closure of $\mathbb{M}^q$ in $L^q(\mathcal{G},P)$.

\begin{remark}
  The closure of $\mathbb{M}^q$ in $L^q$ just consists of the
  corresponding absolutely continuous measures, that is,
  \begin{align}
    \overline{\mathbb{M}^q} &=\{Q\ll P, \frac{dQ}{dP}\in L^q(\mathcal{G},P): E_Q[S^{\alpha_i}_u|\mathcal{F}^A_t]=
                              E_Q[S_t^{\alpha_i}|\mathcal{F}^A_t]\nonumber\text{ a.s.,}\nonumber\\ &\text{for all }A=\{\alpha_1,\dots,\alpha_l\}\in\bigcup_{n\geq 1}\mathcal{A}^n, 1\leq i\leq l\text{ and all }t\leq u\in[0,1]\}
  \end{align}
  Indeed, take any $Q\in\overline{\mathbb{M}^q}$ and
  $Q'\in \mathbb{M}^q$. Then
  $Q^n=(1-2^{-\frac{n}{q}})Q+2^{-\frac{n}{q}}Q'\in \mathbb{M}^q$
  converges to $Q$ with respect to $L^q(\mathcal{G}, P)$-norm.
\end{remark}

Analogously to Section \ref{sec:equivform}, denote by $E^{\circ}$ the
polar cone of a convex cone $E\subseteq L^p(\mathcal{G},P)$ , i.e.,
$$E^{\circ}=\{g\in L^{q}(\mathcal{G},P): E[fg]\leq 0,\text{ for
  all $f\in E$}\}.$$
	
Similarly as in the proof of Theorem \ref{th:sameset}, we can show the
following duality result.
	
\begin{lemma}\label{duality}
  For the polar cone the following identity holds true
  \[ {(\overline{C_p})}^{\circ}={(C_p)}^{\circ}=\bigcup_{\lambda\geq
      0}\la\overline{\mathbb{M}^{q}} \, .
  \]
\end{lemma}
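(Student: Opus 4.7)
First I would note that the equality $(\overline{C_p})^{\circ}=(C_p)^{\circ}$ is a general fact for polar cones when one works with the $L^p$/$L^q$ pairing: ``$\subseteq$'' is immediate from $C_p\subseteq\overline{C_p}$, while for ``$\supseteq$'' any $Z\in(C_p)^{\circ}\subseteq L^q(\mathcal{G},P)$ defines a functional $f\mapsto E_P[Zf]$ which is continuous on $L^p(\mathcal{G},P)$ by H\"older's inequality, so $E_P[Zf]\leq 0$ extends from $C_p$ to its $L^p$-closure. From here on one only has to prove the nontrivial identity $(C_p)^{\circ}=\bigcup_{\lambda\geq 0}\lambda\overline{\mathbb{M}^q}$.

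For the inclusion ``$\supseteq$'', fix $Q\in\mathbb{M}^q$ and $f=X_1-h\in C_p$ with $X_1\in K_0$ and $h\in L^p_+(\mathcal{G},P)$. Writing $X_1=(\mathbf{H}^A\cdot\mathbf{S}^A)_1$ as the finite sum from Definition~\ref{XA}, each summand $H^{\alpha_j}_{t_{i-1}}(S^{\alpha_j}_{t_i}-S^{\alpha_j}_{t_{i-1}})$ lies in $L^1(Q)$ since $H^{\alpha_j}_{t_{i-1}}$ is bounded, the $S^{\alpha_j}_t$ belong to $L^p(\mathcal{G},P)$, and $dQ/dP\in L^q(\mathcal{G},P)$; the tower property combined with the $\mathcal{F}^A_{t_{i-1}}$-measurability of $H^{\alpha_j}_{t_{i-1}}$ and the martingale identity defining $\mathbb{M}^q$ then gives $E_Q[X_1]=0$. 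Since $h\geq 0$, we obtain $E_Q[f]\leq 0$, i.e.\ $E_P[(dQ/dP)\,f]\leq 0$, so $dQ/dP\in(C_p)^{\circ}$. Scaling by $\lambda\geq 0$ and extending to $\overline{\mathbb{M}^q}$ by H\"older-continuity of $g\mapsto E_P[gf]$ on $L^q(\mathcal{G},P)$ yields the whole inclusion.

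For the reverse inclusion, take $Z\in (C_p)^{\circ}$. Since $0\in K_0$ we have $-L^p_+(\mathcal{G},P)\subseteq C_p$, so $E_P[Z(-h)]\leq 0$ for every $h\geq 0$ forces $Z\geq 0$ almost surely. If $Z\equiv 0$ take $\lambda=0$; otherwise set $\lambda:=E_P[Z]>0$ and define $Q$ by $dQ/dP=Z/\lambda$, so $Q\ll P$ with density in $L^q(\mathcal{G},P)$. It remains to verify the conditional expectation identity: for any $A=\{\alpha_1,\dots,\alpha_l\}\in \bigcup_n\mathcal{A}^n$, $1\leq i\leq l$, $t\leq u\in[0,1]$ and $B\in\mathcal{F}^A_t$, the one-step $\mathbb{F}^A$-simple wealth $\pm\mathbf{1}_B(S^{\alpha_i}_u-S^{\alpha_i}_t)$ lies in $K_0\subseteq C_p$, and testing against $Z$ gives $E_Q[\mathbf{1}_B(S^{\alpha_i}_u-S^{\alpha_i}_t)]=0$, which is exactly the conditional expectation identity characterizing $\overline{\mathbb{M}^q}$ via the remark preceding the lemma. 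Hence $Z=\lambda\cdot(dQ/dP)\in\lambda\overline{\mathbb{M}^q}$.

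The main technical point to handle carefully is that the ``$\supseteq$''-inclusion must land in $\overline{\mathbb{M}^q}$ rather than only in $\mathbb{M}^q$; this is innocuous because both the pairing $Q\mapsto E_P[(dQ/dP)f]$ for fixed $f\in L^p(\mathcal{G},P)$ and the conditional expectation identities defining $\mathbb{M}^q$ are preserved under $L^q(\mathcal{G},P)$-convergence of densities. Beyond that, the proof only uses H\"older's inequality, one application of the tower rule, and the fact that the building blocks $\mathbf{1}_B(S^{\alpha_i}_u-S^{\alpha_i}_t)$ are themselves terminal values of $\mathbb{F}^A$-simple strategies.
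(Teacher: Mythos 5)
Your proof is correct and follows essentially the same route as the paper's: the forward inclusion via $E_Q[X_1]=0$ for terminal values of bounded $\mathbb{F}^A$-simple strategies (tower property) together with stability under $L^p$/$L^q$ limits, and the reverse inclusion by observing $Z\geq 0$, normalizing to a density, and testing against the elementary wealths $\pm\mathbf{1}_B(S^{\alpha_i}_u-S^{\alpha_i}_t)\in K_0\subseteq C_p$ to recover the conditional-expectation identities characterizing $\overline{\mathbb{M}^q}$. The only cosmetic difference is that you establish $(\overline{C_p})^{\circ}=(C_p)^{\circ}$ up front by H\"older continuity, whereas the paper obtains it from the chain of inclusions $\bigcup_{\lambda\geq 0}\lambda\overline{\mathbb{M}^q}\subseteq(\overline{C_p})^{\circ}\subseteq(C_p)^{\circ}\subseteq\bigcup_{\lambda\geq 0}\lambda\overline{\mathbb{M}^q}$.
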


\begin{proof}[Proof of Lemma~\ref{duality}]
  First assume that $Z=\frac{dQ}{dP}$ for some
  $Q\in\overline{\mathbb{M}^{q}}$. Let $f\in \overline{C_p}$.  So
  $f=\lim_{n\to\infty}f^n$ in $L^p(\mathcal{G}, P)$ with $f^n\in
  C_p$. Hence $E_Q[f^n]\leq0$ for all $n$ and the same holds for the
  $L^1(\mathcal{G}, P)$-limit of $Zf^n$, and so $E_Q[f]\leq0$.  This
  shows
  $\bigcup_{\lambda\geq 0}\la\overline{\mathbb{M}^{q}}\subseteq
  {(\overline{C_p})}^{\circ}$. And clearly
  ${(\overline{C_p})}^{\circ}\subseteq {(C_p)}^{\circ}$.

  Assume now $Z\in {(C_p)}^{\circ}$. As
  $-L^p_+(\mathcal{G}, P)\subseteq C_p$ this immediately implies that
  $Z\geq 0$ a.s. Assume the non-trivial case that $P(Z>0)>0$ and
  define a probability measure $Q\ll P$ via
  $\frac{dQ}{dP}=\frac{Z}{E_{P}[Z]}$. Hence we get $E_Q[f]\leq0$ for
  all $f\in C_p$ and in the $L^p(\mathcal{G}, P)$-closure
  $\overline{C_p}$. As all $S^i_t$ are in $L^p(\mathcal{G}, P)$ as
  $P \in \mathcal{P}_p$ we have that, for $t\leq u$,
  $\pm\ind_{B}(S^i_u-S^i_{t})\in C_p$ for $B$ in an appropriate
  $\mathcal{F}^i_t$ and all $i\in I$. This shows that
  $Q\in\overline{\mathbb{M}^{q}}$. This finishes the proof.
\end{proof}

We can now prove the following super-replication result.

\begin{theorem}\label{super}
  Let $f\in L^p(\mathcal{G},P)$. Then
$$\sup_{Q\in\mathbb{M}^{q}}E_Q[f]=\inf \big \{x \in \mathbb{R} \, | \; \exists \,
g\in \overline{C_{p}}\text{ with } x+g \geq f \big \}$$
\end{theorem}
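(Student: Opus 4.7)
My plan is to establish the two inequalities in the standard super-hedging duality manner. Write $\pi(f) := \inf\{x \in \mathbb{R} \mid \exists\, g \in \overline{C_p}\text{ with } x+g \geq f\}$ for the right-hand side.

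The \emph{easy direction} $\sup_{Q\in\mathbb{M}^{q}}E_Q[f]\leq \pi(f)$ follows directly from Lemma~\ref{duality}. Fix $x$ such that there exists $g \in \overline{C_p}$ with $f \leq x + g$, and fix any $Q \in \mathbb{M}^q$. Since $dQ/dP \in L^q(\mathcal{G},P)$ and $g \in L^p(\mathcal{G},P)$, the expectation $E_Q[g]$ makes sense, and Lemma~\ref{duality} (with $\lambda = 1$, noting $E_P[dQ/dP]=1$) gives $E_Q[g] \leq 0$. Hence $E_Q[f] \leq x + E_Q[g] \leq x$. Taking sup over $Q$ then inf over $x$ yields the inequality.

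The \emph{hard direction} $\pi(f) \leq \sup_{Q\in\mathbb{M}^{q}}E_Q[f]$ I would prove by contradiction combined with Hahn--Banach separation. Let $\alpha := \sup_{Q\in\mathbb{M}^{q}} E_Q[f]$ (the case $\alpha = +\infty$ is handled trivially by the previous inequality). Suppose $\pi(f) > \alpha$ and pick $x \in (\alpha, \pi(f))$. Then $f - x \notin \overline{C_p}$: indeed, if $f - x = g \in \overline{C_p}$ then $x + g = f \geq f$, so $x \geq \pi(f)$, a contradiction. Since $\overline{C_p}$ is a closed convex subset of $L^p(\mathcal{G},P)$, the Hahn--Banach separation theorem produces $Z \in L^q(\mathcal{G},P)$ such that $E_P[Z(f-x)] > \sup_{g \in \overline{C_p}} E_P[Zg]$. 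As $\overline{C_p}$ is a cone containing $0$, the right-hand sup equals $0$, so $Z \in (\overline{C_p})^{\circ}$ and $E_P[Z(f-x)] > 0$. By Lemma~\ref{duality}, $Z = \lambda \, dQ/dP$ for some $\lambda \geq 0$ and $Q \in \overline{\mathbb{M}^q}$; strict positivity forces $\lambda > 0$, and after rescaling we obtain $E_Q[f] > x$.

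The remaining step, which is also where the one genuine subtlety lies, is to upgrade $Q \in \overline{\mathbb{M}^q}$ (merely absolutely continuous) to an element of $\mathbb{M}^q$ (equivalent to $P$) without losing the strict inequality. The main obstacle is that Hahn--Banach only provides a separating functional in the closed dual cone, not in $\mathbb{M}^q$ itself, and I need a way back inside. Since (NAFLp) for $P$ gives $\mathbb{M}^q \neq \emptyset$, pick $Q^\star \in \mathbb{M}^q$ and set
\[
Q_n := \bigl(1 - \tfrac{1}{n}\bigr) Q + \tfrac{1}{n} Q^\star.
\]
Each $Q_n$ is equivalent to $P$ (its density is bounded below by $\tfrac{1}{n}\,dQ^\star/dP > 0$), has density in $L^q$ by convexity of the $L^q$-ball, and the defining conditional-expectation equalities of $\mathbb{M}^q$ are linear in $dQ/dP$, hence preserved under convex combinations; thus $Q_n \in \mathbb{M}^q$. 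Finally, $\|dQ_n/dP - dQ/dP\|_q = \tfrac{1}{n}\|dQ^\star/dP - dQ/dP\|_q \to 0$, so by Hölder $E_{Q_n}[f] \to E_Q[f] > x > \alpha$, contradicting the definition of $\alpha$. This yields $\pi(f) \leq \alpha$ and completes the proof.
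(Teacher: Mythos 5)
Your proof is correct and follows essentially the same route as the paper: Hahn--Banach separation of $f-x$ from the closed convex cone $\overline{C_p}$, combined with the polar-cone identification of Lemma~\ref{duality}. You are in fact slightly more careful than the paper's own proof, which asserts outright that the separating density yields a measure in $\mathbb{M}^q$ rather than merely in $\overline{\mathbb{M}^q}$; your convex-combination approximation with a fixed $Q^\star\in\mathbb{M}^q$ is precisely the device the paper records in its remark on $\overline{\mathbb{M}^q}$, and it closes that small gap cleanly.
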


\begin{proof}[Proof of Theorem~\ref{super}]
  By Lemma~\ref{duality} it is clear that $\sup \leq \inf$. Let now
  $x_0=\sup_{Q\in\mathbb{M}^q}E_Q[f]$ and suppose that
  $x_0<\inf \{x \in \mathbb{R} \, | \; \exists \, g\in
  \overline{C_{p}}\text{ with } x+g \geq f \big \}$.  Then
  $f-x_0\notin \overline{C_p}$. Hence there exists
  $Z\in L^{q}(\mathcal{G},P)$ such that
  $\sup_{g\in \overline{C_p}}E[Zg]\leq 0$ and $E[Z(f-x_0)]>0$. This
  implies that $Z\in (\overline{C_p})^{\circ}$ and $Z\ne0$ therefore
  we can define a measure $Q\in\mathbb{M}^{q}$ by
  $\frac{dQ}{dP}=\frac{Z}{E_{P}[Z]}$. We get
$$x_0<E_Q[f]\leq\sup_{R\in\mathbb{M}^{q}}E_R[f].$$ This
implies that $\inf\leq\sup$.
\end{proof}

\begin{remark}
  As shown in the course of the proof in Theorem \ref{th:sameset}
  \[ {\Big(\overline{\bigcap_{P' \sim \mathcal{P}_p}
        C_p(P')}^{\cap}\Big)}^\circ ={\Big(\bigcap_{P' \sim
        \mathcal{P}_p} \overline{C_p(P')}\Big)}^\circ=
    \bigcup_{\lambda\geq 0}\la\overline{\mathcal{M}^{q}}^{\cup}
  \]
  holds true. Recall here that $\overline{E}^{\cap}$ and
  $\overline{E}^{\cup}$ denote the closure of a set $E$ with respect
  to the projective respectively injective locally convex topology on
  $ \cap_{P' \in \mathcal{P}_p} L^p(\mathcal{G}, P') $ respectively
  $ \cup_{P' \in \mathcal{P}_p} L^q(\mathcal{G}, P') $. This gives
  rise to another slightly different super-replication result, namely
  for $f\in \cap_{P' \in \mathcal{P}_p} L^p(\mathcal{G},P')$, we have
$$\sup_{Q\in\mathcal{M}^{q}}E_Q[f]=\inf \left\{x \in \mathbb{R} \, | \; \exists \,
  g\in \bigcap_{P' \in \mathcal{P}_p}\overline{C_p(P')}\text{ with }
  x+g \geq f \right\}$$
	
\end{remark}

In the setting of Subsection~\ref{main_weakened} under the weaker
Assumption~\ref{lplong} we get an analogous super-replication result
for long only strategies and measures in $\mathbb{S}^Q$ defined as
follows:

\begin{align*}
  \mathbb{S}^q =\{&Q\sim P \, | \,  \frac{dQ}{dP}\in L^q(\mathcal{G},P) \text{ and } E_Q[S^{\alpha_i}_u|\mathcal{F}^A_t]\leq
                    E_Q[S_t^{\alpha_i}|\mathcal{F}^A_t]\text{ a.s.,}\\ &\text{for all }A=\{\alpha_1,\dots,\alpha_l\}\in\bigcup_{n\geq 1}\mathcal{A}^n, 1\leq i\leq l\text{ and all }t\leq u\in[0,1]\}.
\end{align*}

\begin{theorem}\label{variantsuper}
  Let $f\in L^p(\mathcal{G},P)$. Then
$$\sup_{Q\in\mathbb{S}^{q}}E_Q[f]=\inf \big \{ x \in \mathbb{R} \, | \; \exists \,
g\in \overline{C_{p}}\text{ for long only strategies such that } x+g
\geq f \big \} \, .$$
\end{theorem}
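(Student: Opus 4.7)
My plan is to follow the blueprint of the proof of Theorem~\ref{super} closely, replacing martingale measures by supermartingale measures and accounting for the long-only restriction. The backbone of the argument is the polar-cone identity $(\overline{C_p})^\circ = (C_p)^\circ = \bigcup_{\lambda \geq 0}\lambda \overline{\mathbb{S}^q}$ (closure in $L^q(\mathcal{G},P)$), analogous to Lemma~\ref{duality}, after which the super-replication equality is obtained from Hahn--Banach separation together with a small convex-combination perturbation that brings an absolutely continuous separating measure back into $\mathbb{S}^q$.

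The first step is to prove this polar-cone identity. For the inclusion $\bigcup_{\lambda \geq 0}\lambda\overline{\mathbb{S}^q}\subseteq(\overline{C_p})^\circ$: if $Q \in \overline{\mathbb{S}^q}$ and $f = X_1 - h \in C_p$ with $X_1$ the terminal value of a bounded long-only simple strategy and $h\in L^p_+(\mathcal{G},P)$, a telescoping computation using $H^{\alpha_j}_{t_{i-1}}\geq 0$ and the supermartingale property yields $E_Q[X_1]\leq 0$, hence $E_Q[f]\leq 0$, and continuity in $L^p(\mathcal{G},P)$ extends this to $\overline{C_p}$. For the reverse inclusion $(C_p)^\circ \subseteq \bigcup_{\lambda\geq0}\lambda\overline{\mathbb{S}^q}$: $-L^p_+(\mathcal{G},P)\subseteq C_p$ forces $Z \geq 0$; define $Q$ by $dQ/dP = Z/E_P[Z]$ in the non-trivial case. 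To check the supermartingale property, take $A=\{\alpha_1,\dots,\alpha_l\}$, $t\leq u$, $B\in\mathcal{F}^A_t$, and $i$. The element $\ind_B(S^{\alpha_i}_u - S^{\alpha_i}_t)$ need not lie in $L^p(\mathcal{G},P)$, but exactly as in the proof of Theorem~\ref{variant} the truncations $\ind_B(S^{\alpha_i}_u - S^{\alpha_i}_t)\wedge n$ belong to $C_p$ for every $n\geq 1$ (by Assumption~\ref{lplong} their negative parts are in $L^p$ and they are bounded above by $n$). Hence $E_Q[\ind_B(S^{\alpha_i}_u - S^{\alpha_i}_t)\wedge n]\leq 0$, and Fatou's lemma sends $n\to\infty$ to yield $E_Q[\ind_B(S^{\alpha_i}_u - S^{\alpha_i}_t)]\leq 0$, i.e.~$Q\in\overline{\mathbb{S}^q}$.

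With the polar identity in hand, the inequality $\sup\leq\inf$ is immediate: for $Q\in\mathbb{S}^q$ and $g\in\overline{C_p}$ with $x + g \geq f$, $E_Q[f]\leq x + E_Q[g]\leq x$. For the reverse inequality, suppose $x_0 := \sup_{Q\in\mathbb{S}^q}E_Q[f]<\inf$, so that $f-x_0\notin\overline{C_p}$. Hahn--Banach in $L^p(\mathcal{G},P)$ produces $Z\in L^q(\mathcal{G},P)\setminus\{0\}$ with $\sup_{g\in\overline{C_p}}E[Zg]\leq 0$ and $E[Z(f-x_0)]>0$, i.e.~$Z\in(\overline{C_p})^\circ$. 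The polar identity gives $Q_0\in\overline{\mathbb{S}^q}$ with $dQ_0/dP = Z/E_P[Z]$, but $Q_0$ may only be absolutely continuous, not equivalent, to $P$. To remedy this, by Theorem~\ref{variant} the assumed (NAFLp) for long-only strategies ensures existence of some $Q_1\in\mathbb{S}^q$, and for $\lambda\in[0,1)$ the mixture $Q_\lambda := \lambda Q_0 + (1-\lambda)Q_1$ lies in $\mathbb{S}^q$ (equivalence to $P$ comes from the $Q_1$ component, $L^q$-integrability and the supermartingale property are preserved under convex combinations). By linearity $E_{Q_\lambda}[f-x_0]\to E_{Q_0}[f-x_0]>0$ as $\lambda\uparrow 1$, contradicting the definition of $x_0$.

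The main obstacle is the duality lemma for long-only strategies, where the asymmetry between long and short positions means one cannot directly test with $\pm\ind_B(S^{\alpha_i}_u - S^{\alpha_i}_t)$ as in Lemma~\ref{duality}; only the long-only truncations are admissible test claims, and extracting the supermartingale inequality at the limit requires the Fatou step on the right-hand side. Once this is handled, the rest is a routine Hahn--Banach plus a one-line convex-combination trick to move from $\overline{\mathbb{S}^q}$ into $\mathbb{S}^q$.
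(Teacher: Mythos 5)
Your proof is correct and follows essentially the same route as the paper: adapt Lemma~\ref{duality} with $\overline{\mathbb{S}^q}$ in place of $\overline{\mathbb{M}^q}$ (using the truncation-and-Fatou argument from Theorem~\ref{variant} for the reverse inclusion, since $\pm\ind_B(S^{\alpha_i}_u-S^{\alpha_i}_t)$ is no longer an admissible test claim) and then run the Hahn--Banach separation exactly as in Theorem~\ref{super}. Your explicit convex-combination step passing from the absolutely continuous separating measure $Q_0\in\overline{\mathbb{S}^q}$ to an equivalent element of $\mathbb{S}^q$ is a detail the paper leaves implicit, and you handle it correctly.
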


\begin{proof} For the proof of Theorem~\ref{variantsuper} we have to
  adapt Lemma~\ref{duality} by replacing $\overline{\mathbb{M}^q}$ by
  $\overline{\mathbb{S}^q}$. In the proof we get by the boundedness of the integrands and as $Q\in \overline{\mathbb{S}^q}$ that $E_Q[f^n]\leq0$
  for $f^n=X^n_1-h^n$ with $f^n\in L^p(\mathcal{G},P)$, $X^n_1\in K_0$ and
  $h^n\in L^0_+(\mathcal{G},P)$.  The rest is identical.
\end{proof}

The next theorem represents elements of $ \overline{C_p} $ as
$L^0$-limits of replicable claims minus consumptions, and clarifies
additionally that
\[
  \overline{C_p} \cap - \overline{C_p} \subset \overline{K_0}^{L^0}
  \cap L^p(\mathcal{G}, P)
\]
as well as that proper intervals of arbitrage-free prices are
open. Here we denote by $ \overline{K_0}^{L^0}$ the $L^0$-closure of
$K_0$. These considerations are of course almost classical, their
proofs do not differ much from classical counterparts.

\begin{remark}\label{attainable}
  \begin{enumerate}
  \item Note that
    $\overline{C_p} \cap - \overline{C_p}\supseteq \overline{K_0}$,
    where $\overline{K_0}$ is the $L^p$-closure of $K_0$.
  \item The set $\overline{C_p} \cap - \overline{C_p}$ is dually
    characterized as the set of elements $ g $ such that
    $ E_Q[g] = 0 $ for all $ Q \in \overline{\mathbb{M}^q }$ by the
    bipolar theorem.
  \end{enumerate}
  For (ii) note that if $g$ is in
  $\overline{C_p} \cap - \overline{C_p}$. Then clearly $E_Q[g]=0$ for
  all $Q\in\overline{\mathbb{M}^q }$. On the other hand, suppose
  $E_Q[g]=0$ for all $Q\in\overline{\mathbb{M}^q }$. Then $g$ and $-g$
  are in
  $\left(\bigcup_{\lambda\geq0}\lambda\overline{\mathbb{M}^q
    }\right)^{\circ}=\overline{C_p}$, where the last equality holds by
  the bipolar theorem.
\end{remark}

\begin{theorem}\label{actual_super-replication}
  Assume that (NAFLp) holds for $P$.
  \begin{enumerate}
  \item Every $g\in\overline{C_p} $ can be represented as $L^p$-limit
    $g=\lim_{n\to \infty} \big(f^n - h^n) = f - h $, where $f$ and $h$
    are finitely valued random variables, with
    $ f = \lim_{n \to \infty} f^n $ appearing only as limit in
    probability of a sequence $ f_n \in K_0 $, and
    $ \lim_{n\to \infty} h^n = h \geq 0 $ being again a limit in
    probability of finitely valued, non-negative random variables
    $h^n$.
  \item Let $\tilde{g}\in L^p(\mathcal{G},P)$. Then either
    $\tilde{g} $ is replicable (attainable), i.e.
  $$ \tilde{g} -x \in \overline{C_p} \cap - \overline{C_p}= \bigcap_{Q\in\overline{\mathbb{M}^q}}\overline{K_0}^{L^1(Q)}\cap L^p(\mathcal{G},P)
  \subset \overline{K_0}^{L^0} \cap L^p(\mathcal{G},P)$$ for some
  $ x \in \mathbb{R}$, or there are at least two measures
  $ Q, Q' \in \mathbb{M}^q $ such that
  $ E_Q[\tilde{g}] \neq E_{Q'}[\tilde{g}] $. In the second case the
  super-replication price
  $ x = \sup_{Q\in\mathbb{M}^{q}}E_Q[\tilde{g}]$ is not attained by
  any equivalent measure $ Q \in \mathbb{M}^q $.
\end{enumerate}
\end{theorem}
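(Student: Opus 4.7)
For part (i), I start with $g \in \overline{C_p}$ and an approximating sequence $g_n = X^n_1 - h_n \to g$ in $L^p(\mathcal{G},P)$ with $X^n_1 \in K_0$ and $h_n \in L^p_+(\mathcal{G},P)$, and upgrade the convergence by a Koml\'os argument. The key observation is that $h_n \geq 0$ forces $(X^n_1)^- \leq g_n^-$, making $(X^n_1)^-$ uniformly $L^p(P)$-bounded. Picking any $Q \in \mathbb{M}^q$, which is non-empty by Theorem~\ref{ftap-p}, H\"older's inequality together with $E_Q[X^n_1]=0$ (a consequence of $X^n_1 \in K_0$ and the martingale property of $Q$) yields $L^1(Q)$-boundedness of the entire sequence $(X^n_1)$. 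Koml\'os then provides convex combinations $\tilde X^m$ of the $X^n_1$, which remain in $K_0$ by convexity (a consequence of the refining property), converging $P$-a.s.\ to a finite-valued $f$. The matching Ces\`aro combinations of the $g_n$ still converge to $g$ in $L^p$ and a.s., so $\tilde h^m := \tilde X^m - \tilde g^m \to h := f - g$ a.s., with $h \geq 0$ coming from $\tilde h^m \geq 0$.

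For the set identity in part (ii), the inclusion $\overline{K_0}^{L^1(Q)}\subseteq \overline{K_0}^{L^0}$ is immediate since $Q\sim P$. By Remark~\ref{attainable}(ii), $g \in \overline{C_p}\cap -\overline{C_p}$ is characterized by $E_Q[g]=0$ for all $Q \in \overline{\mathbb{M}^q}$; to conclude $g \in \overline{K_0}^{L^1(Q)}$ for a fixed $Q$ I apply Hahn-Banach and check $E_Q[\varphi g]=0$ for every $\varphi \in L^\infty(Q)$ with $E_Q[\varphi X]=0$ on $K_0$. The key trick is a shift: for $c$ large enough $\varphi + c > 0$, so the normalized measure $dR := (\varphi+c)\,dQ/E_Q[\varphi+c]$ lies in $\overline{\mathbb{M}^q}$ (a bounded factor against $dQ/dP \in L^q(P)$, and the martingale property of $R$ follows from combining that of $Q$ with the assumed vanishing of $E_Q[\varphi X]$ on $K_0$). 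Therefore $E_R[g]=0$, which together with $E_Q[g]=0$ forces $E_Q[\varphi g]=0$. The reverse inclusion is routine via $E_Q[X]=0$ for $X \in K_0$. The stated dichotomy falls out at once: attainability is equivalent to $Q \mapsto E_Q[\tilde g]$ being constant on $\overline{\mathbb{M}^q}$, and $L^q$-density of $\mathbb{M}^q$ in $\overline{\mathbb{M}^q}$ combined with the continuity of this map for $\tilde g \in L^p$ ensures that non-constancy is already witnessed by two measures inside $\mathbb{M}^q$.

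For the final assertion that $x^* := \sup_{Q \in \mathbb{M}^q} E_Q[\tilde g]$ cannot be attained inside $\mathbb{M}^q$ when $\tilde g$ is not replicable, I argue by contradiction. Assume $Q^* \in \mathbb{M}^q$ attains $x^*$ and pick $Q_1 \in \mathbb{M}^q$ with $E_{Q_1}[\tilde g]<x^*$ (available from the dichotomy). The signed measure $Q'' := (1+\epsilon)Q^* - \epsilon Q_1$ satisfies $E_{Q''}[\tilde g]= x^* + \epsilon(x^* - E_{Q_1}[\tilde g]) > x^*$ and inherits the martingale property by linearity, which would contradict the definition of $x^*$ as soon as $Q'' \in \mathbb{M}^q$. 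This is where the main obstacle lies: $Q''$ is a positive probability measure only when $dQ_1/dQ^*$ is essentially bounded by $(1+\epsilon)/\epsilon$, a condition that need not hold in infinite dimensions since equivalent densities need not be bounded. I plan to bypass this by first approximating $Q_1$ in $L^q(P)$ by measures $Q_1^{(n)} \in \mathbb{M}^q$ with bounded density ratio to $Q^*$ (obtained, for instance, by truncating $dQ_1/dQ^*$ and restoring the martingale property via suitable convex mixtures with $Q^*$), using $L^q$-continuity of $Q \mapsto E_Q[\tilde g]$ to preserve $E_{Q_1^{(n)}}[\tilde g]<x^*$ for $n$ large, and then applying the relative-interior construction above with $Q_1^{(n)}$ in place of $Q_1$.
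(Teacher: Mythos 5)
Parts (i) and the set identity plus dichotomy in (ii) are essentially correct, though you take partly different routes from the paper. In (i) you apply Koml\'os to the wealth terms $X^n_1$ after deriving $L^1(Q)$-boundedness from $(X^n_1)^-\leq g_n^-$ and $E_Q[X^n_1]=0$; the paper instead applies the non-negative version of Koml\'os to the consumption terms and then uses Fatou under a fixed $Q\in\mathbb{M}^q$ to see that the limit $h$ (hence $f=g+h$) is finitely valued. Both work; your only slip is the claim that the matching convex combinations of the $g_n$ converge a.s.\ --- $L^p$-convergence gives this only along a subsequence, which is harmless to fix. For the identity
$\overline{C_p}\cap-\overline{C_p}=\bigcap_{Q\in\overline{\mathbb{M}^q}}\overline{K_0}^{L^1(Q)}\cap L^p(\mathcal{G},P)$
your Hahn--Banach argument with the shifted measure $dR\propto(\varphi+c)\,dQ$ is a genuinely different and rather clean alternative to the paper's constructive route (which re-runs the Koml\'os decomposition, shows $E_Q[h^n]\to 0$, hence $h=0$ and $f^n\to g$ in $L^1(Q)$). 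Your version is valid: $1_B(S^{i}_u-S^{i}_t)\in K_0$ gives the martingale property of $R$, so $R\in\overline{\mathbb{M}^q}$, and Remark~\ref{attainable}~(ii) closes the loop; note this uses that $K_0$ is a linear subspace (refining property), which also legitimizes the Hahn--Banach characterization of its $L^1(Q)$-closure. The dichotomy via constancy of $Q\mapsto E_Q[\tilde g]$ on the $L^q$-dense set $\mathbb{M}^q\subseteq\overline{\mathbb{M}^q}$ is fine.

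The genuine gap is in the final assertion, and you have located it yourself but not closed it. The extrapolated signed measure $(1+\epsilon)Q^*-\epsilon Q_1$ is a probability measure only when $dQ_1/dQ^*$ is essentially bounded, and the proposed repair does not work: truncating $dQ_1/dQ^*$ destroys the martingale property of the optional projections, and convex combinations of the resulting (non-martingale) measure with $Q^*$ cannot restore it --- convexity of $\mathbb{M}^q$ only helps when both ingredients already lie in $\mathbb{M}^q$, and there is no evident projection back onto $\mathbb{M}^q$ that simultaneously controls the density ratio and keeps the expectation of $\tilde g$ strictly below $x^*$. The paper avoids extrapolation entirely: since $\overline{C_p}$ is $L^p$-closed, the superhedging price $x$ of Theorem~\ref{super} satisfies $\tilde g-x\in\overline{C_p}$; if an \emph{equivalent} $Q^*\in\mathbb{M}^q$ attains $x$, then in the decomposition $\tilde g-x=\lim_n(f^n-h^n)$ from part~(i) one gets $E_{Q^*}[h^n]\to-E_{Q^*}[\tilde g-x]=0$, so the non-negative consumption limit $h$ vanishes $Q^*$-a.s., hence $P$-a.s.\ by equivalence, and the argument of the second part of the proof then runs with this single attaining measure, yielding replicability. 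You should replace the extrapolation step by this ``closedness plus vanishing consumption'' mechanism, which is precisely where the equivalence of $Q^*$ (as opposed to mere absolute continuity) enters.
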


\begin{proof}
  For the first assertion take
  $ g = \lim_{n \to \infty}( g^n - k^n )$, an $L^p$-limit, where
  $ g^n \in K_0 $ and $ k^n \in L^p_+(\mathcal{G},P)$. By Komlos'
  theorem we can choose forward convex combinations $ h^n $ of
  elements $ k^n,k^{n+1},\ldots $ such that $ h^n \to h $ in
  probability, where $ h\geq0 $ is a not necessarily finitely valued
  random variable. Take forward convex combinations with the same
  weights of $ g^n,g^{n+1},\ldots $ and denote them by $ f^n $. Then
  again $ f^n - h^n \to g $ in $ L^p $. Take now any
  $ Q \in \mathbb{M}^q $. Then $E_Q[f^n]=0$ and by
  $L^p(\mathcal{G},P)$-convergence we have that
  $\lim_{n\to\infty}E_Q[-h^n]=E_Q[g]>-\infty$.  By Fatou's Lemma
  $$ 0 \leq E_Q[h] \leq \lim_{n \to \infty} E_Q[h^n]<\infty ,$$ whence $ h $ is
  finitely valued as well as the limit
  $ \lim_{n \to \infty} f^n = f $, which is only understood in $L^0$.

  For the second assertion take $ \tilde g \in L^p(\mathcal{G},P)$ and
  $ x \in \mathbb{R} $ such that
  $ g=\tilde{g} - x \in \overline{C_p} \cap - \overline{C_p} $. By
  Remark~\ref{attainable}, $E_Q[g]=0$ for all
  $Q\in\overline{\mathbb{M}^q}$.  As in the previous step by passing
  to forward convex combinations we find two sequences $ f^n \in K_0 $
  and $h^n \geq 0 $, each converging in $L^0$ to finitely valued
  random variables, such that $ f^n -h^n \to g $ in $ L^p $ as
  $ n \to \infty $. Take any $ Q \in \mathbb{M}^q $, then
  $0\leq\lim_{n\to\infty}E_Q[h^n]= -E_Q[g]=0 $ by convergence in
  $L^p$. Hence actually $ h = 0 $. Therefore
  $ g \in \overline{K_0}^{L^0}\cap L^p(\mathcal{G},P)$. Moreover we
  have that $h^n\to0$ in $L^1(\mathcal{G},Q)$ and hence
   $$E_Q[|f^n-g|]\leq E_Q[|f^n-h^n-g|]+E_Q[h^n]\to0,$$
   for $n\to\infty$ as $f^n-h^n\to g$ in $L^p(\mathcal{G},P)$ and
   $h^n\to0$ in $L^1(\mathcal{G},Q)$. So we have that, in fact,
   $g\in
   \bigcap_{Q\in\overline{\mathbb{M}^q}}\overline{K_0}^{L^1(Q)}$.

   This argument holds also true when there is only one element
   $ Q \in \mathbb{M}^q $ such that $ E_Q[\tilde g] = x $, whence the
   last assertion that proper pricing intervals have to be open which
   in turn only appears in case of non-attainability.

   Finally observe that for
   $g\in
   \bigcap_{Q\in\overline{\mathbb{M}^q}}\overline{K_0}^{L^1(Q)}\cap
   L^p$ we have that $E_Q[g]=0$ for any $Q\in\overline{\mathbb{M}^q}$
   and hence $g$ is in $\overline{C_p} \cap - \overline{C_p}$ which
   shows the equality of the two sets.
 \end{proof}

\begin{remark}
  Notice that a replicable claim in our setting is replicated by
  $L^0$-limits of elements of $K_0$, not necessarily by its
  $L^p$-limits. This subtlety cannot be removed.
\end{remark}

The following example shows that
$ \overline{C_1} \cap - \overline{C_1} \subsetneq \overline{K_0}^{L^0}
\cap L^1(\mathcal{G}, P)$:

\begin{example}
  Consider a one period market with countably many derivatives
  $ f^j \geq -1 $ at time $ T=1$, $ j \geq 0 $, at price $0$ at time
  $ t=0 $. Assume for simplicity that the historical measure $P$
  satisfies already the following two conditions:
  \begin{itemize}
  \item $ E[f^j]=0 $ for $ j \geq 0 $.
  \item The sequence $ f^j $ converges to $ - 1 $ almost surely with
    respect to $P$, hence -- of course -- the convergence is not in
    $ L^1 $.
  \end{itemize}

  In this case we can create, as in the introduction, a two filtration
  setting, where hedging is actually buy \& hold in this large
  financial market, and the filtration $ \mathbb{F} $ is trivial. In
  this setting $ K_0 $ contains all (finite) linear combinations of
  $ f^j $, its closure in $L^1$ only contains elements with vanishing
  expectations, however, its $L^0$ closure even if intersected with
  $L^1$ contains the constant function $ -1 $ and $1$ (the latter by
  taking $-f_j\in K_0$).  However
  $ \{-1,1\} \notin \overline{C_1} \cap - \overline{C_1}$, whence
  \[
    \overline{C_1} \cap - \overline{C_1} \subsetneq
    \overline{K_0}^{L^0} \cap L^1.
  \]
\end{example}

\appendix
\section{A technical result}

The following theorem, which goes back to Jia-An Yan \cite{Y:80} for $p=1$
and to Jean-Pascal Ansel for the case $1\leq p < \infty$ is taken from
\cite{S:90}:
\begin{theorem}\label{th:asy}
  Let $E$ be a convex subset of $L^p(\mathcal{G},P)$ with $0 \in E$. 
  Then the following three conditions are equivalent:
  \begin{enumerate}
  \item For every $\eta \in L_+^p(\mathcal{G},P)$, $\eta \neq 0$,
    there exits some $c >0$ such that
    $c \eta \notin \overline{E-L^p_+(\mathcal{G},P)}$.
  \item For every $A \in \mathcal{G}$ such that $P[A] >0$, there exists some
    $c >0$ such that
    $c1_{A} \notin \overline{E-L^p_+(\mathcal{G},P)}$.
  \item There exists a random variable $Z \in L^q(\mathcal{G}, P)$
    such that $Z >0$ a.s. and $\sup_{Y \in E} E[ZY]< \infty$.
  \end{enumerate}
\end{theorem}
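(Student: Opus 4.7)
The plan is to establish the cycle (i) $\Leftrightarrow$ (ii) $\Rightarrow$ (iii) $\Rightarrow$ (i). The equivalence (i) $\Leftrightarrow$ (ii) is essentially bookkeeping: (i) $\Rightarrow$ (ii) follows by plugging $\eta = 1_A$ into (i) for any $A$ with $P[A]>0$, and (ii) $\Rightarrow$ (i) follows once we have (iii). For the implication (iii) $\Rightarrow$ (i): given $Z\in L^q(\mathcal{G},P)$ with $Z>0$ a.s.\ and $M:=\sup_{Y\in E}E[ZY]<\infty$, note that for any $g = Y-h$ with $Y\in E$ and $h\in L^p_+(\mathcal{G},P)$ we have $E[Zg] = E[ZY]-E[Zh]\leq M$, and this bound extends by continuity of $f\mapsto E[Zf]$ on $L^p(\mathcal{G},P)$ to all of $\overline{E-L^p_+(\mathcal{G},P)}$. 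For $\eta\in L^p_+(\mathcal{G},P)\setminus\{0\}$ the quantity $E[Z\eta]>0$, so any $c > M/E[Z\eta]$ gives $E[Zc\eta]>M$ and hence $c\eta\notin\overline{E-L^p_+(\mathcal{G},P)}$.

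The main (and only nontrivial) implication is (ii) $\Rightarrow$ (iii), and I would prove it by the classical exhaustion argument combined with Hahn--Banach separation. First, fix any $A\in\mathcal{G}$ with $P[A]>0$ and the constant $c>0$ provided by (ii). Since $\overline{E-L^p_+(\mathcal{G},P)}$ is a closed convex subset of $L^p$ not containing $c1_A$, the Hahn--Banach separation theorem in $L^p(\mathcal{G},P)$ (with dual $L^q(\mathcal{G},P)$) produces $Z_A\in L^q(\mathcal{G},P)$ and $\alpha\in\mathbb{R}$ with $E[Zg]\leq\alpha<E[Z_A\, c1_A]$ for all $g\in\overline{E-L^p_+(\mathcal{G},P)}$. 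Since $0\in E$ (so $-\lambda h\in E-L^p_+$ for every $\lambda\geq 0$ and $h\in L^p_+$), letting $\lambda\to\infty$ in $E[Z_A(-\lambda h)]\leq \alpha$ forces $E[Z_A h]\geq 0$ for all $h\in L^p_+$, whence $Z_A\geq 0$ a.s. Moreover $\sup_{Y\in E}E[Z_A Y]\leq\alpha<\infty$, and $\int_A Z_A\,dP>0$, so $Z_A>0$ on a measurable subset of $A$ of positive measure.

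Now I would run the exhaustion: consider the class
\[
\mathcal{D}=\left\{D\in\mathcal{G}\,:\,\exists\,Z_D\in L^q_+(\mathcal{G},P),\ Z_D>0\text{ on }D,\ \sup_{Y\in E}E[Z_D Y]<\infty\right\}
\]
and set $s=\sup\{P[D]:D\in\mathcal{D}\}$. Pick $D_n\in\mathcal{D}$ with $P[D_n]\uparrow s$ and associated functionals $Z_n$; by scaling so that $\|Z_n\|_q\leq 2^{-n}$ and $\sup_{Y\in E}E[Z_n Y]\leq 2^{-n}$, the sum $Z^\ast=\sum_n Z_n$ converges in $L^q$, satisfies $\sup_{Y\in E}E[Z^\ast Y]\leq 1$, and is strictly positive on $D^\ast=\bigcup_n D_n$, so $D^\ast\in\mathcal{D}$ with $P[D^\ast]=s$. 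If $s<1$, then $A:=\Omega\setminus D^\ast$ has positive measure; applying the Hahn--Banach step of the previous paragraph to $A$ gives $Z_A\geq 0$ with $Z_A>0$ on a subset $B\subset A$ of positive measure and $\sup_{Y\in E}E[Z_A Y]<\infty$. Then $Z^\ast+\varepsilon Z_A$ (with $\varepsilon>0$ small enough for integrability/bound) witnesses $D^\ast\cup B\in\mathcal{D}$, contradicting maximality. Hence $s=1$, $Z^\ast>0$ a.s., and (iii) holds.

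The main obstacle is the exhaustion step, specifically making sure the countable combination $Z^\ast=\sum_n Z_n$ both lies in $L^q$ and continues to yield a finite supremum over $E$; this is handled by rescaling each $Z_n$ (multiplying by a small constant does not destroy the positivity of its support set), and by using that $E[Z^\ast Y]=\sum_n E[Z_n Y]$ is controlled by $\sum_n 2^{-n}$ uniformly in $Y\in E$. Everything else is routine separation and positivity bookkeeping.
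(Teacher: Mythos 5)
Your proof is correct and is the standard Kreps--Yan argument (Hahn--Banach separation of $c1_A$ from the closed convex cone $\overline{E-L^p_+}$, positivity of the separating functional via $0\in E$, then exhaustion with geometrically rescaled functionals $Z_n$ summed in $L^q$). The paper itself gives no proof of this appendix result but cites Yan and Stricker, whose argument is exactly the one you reproduce, so there is nothing to compare beyond noting that your route matches the classical source.
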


\end{document}